\definecolor{darkblue}{rgb}{0, 0, 0.5}
\newcommand{\bm}[1]{\boldsymbol{#1}}
\newtheorem{theorem}{Theorem}[section]
\newtheorem{lemma}[theorem]{Lemma}
\newtheorem{corollary}{Corollary}[theorem]
\theoremstyle{definition}
\def\C {\,|\:}
\def\db {\bar d_{B_1(t)}}
\def\dbb {\bar d_{B_2(t)}}
\def\hsn {\widehat{s_n}}
\def\tsn {\widetilde{s_n}}
\newcommand{\E}{\mathbb{E}}
\def\P{\mathbb{P}}
\def\O{\mathcal{O}}
\def\wt {\widetilde}
\def\firstmu {(\bm \mu^*)_{\hat t}}
\def\secondmu {(\bm \mu^*)_{n-\hat t}}
\def\firstempmu {(\bm \mu)_{\hat t}}
\def\secondempmu {(\bm \mu)_{n-\hat t}}
\def\firstempep {(\bm \epsilon)^{\hat t}_1}
\def\secondempep {(\bm \epsilon)^{n}_{\hat t+1}}
\def\firstep {(\bm \epsilon)^{t^*}_1}
\def\secondep {(\bm \epsilon)^{n}_{t^*+1}}
\theoremstyle{definition}
\begin{document}

	\def\spacingset#1{\renewcommand{\baselinestretch}%
		{#1}\small\normalsize} \spacingset{1}

\appendix
\section{Technical Proofs}
This section includes proofs to all theoretical results in the main text. First, let us introduce some additional notations.

\subsection{Additional Notations}
Denote $z_i=\phi(y_i)$ where $\phi(\cdot)$ is defined in \eqref{def:phi}. Denote $\epsilon_i=z_i-\mu_i$ where $\mu_i:= \E z_i$. Let
$s_n^2=n^{-1}\sum_{i=1}^nVar(\|\epsilon_i\|^2)$, i.e.,  $s_n^2=\rho^*Var_{F_0}(\|\epsilon\|^2)+(1-\rho^*)Var_{F_1}(\|\epsilon\|^2)$ if there is a change point and $s_n^2=Var_{F_0}(\|\epsilon\|^2)$ if there is no change point.  Denote $\bar\epsilon_n=\frac{1}{n}\sum_{i=1}^{n}\epsilon_i$, $\bar\epsilon_t=\frac{1}{t}\sum_{i=1}^{t}\epsilon_i$ and $\bar\epsilon_{n-t}=\frac{1}{n-t}\sum_{i=1}^{n-t}\epsilon_i$. Denote $\overline{\phi(y)}_t=\frac{1}{t}\sum_{i=1}^{t}\phi(y_i)$, $\overline {\phi(y)}_{n-t}=\frac{1}{t}\sum_{i=t+1}^{n}\phi(y_i)$. Define $\left(\bm{\mu}^*\right)_t=(\mu_1,\mu_2,\cdots,\mu_t)^T\in\mathcal{H}^t$,  $\left(\bm{\mu}^*\right)_{n-t}=(\mu_{t+1},\mu_{t+2},\cdots,\mu_n)^T\in\mathcal{H}^{n-t}$,  $\left(\bm{\mu}\right)_t=(\overline \mu_t,\overline \mu_t,\cdots,\overline \mu_t)^T\in \mathcal{H}^t$ where $\overline \mu_t=\frac{1}{t}\sum_{i=1}^{t}\mu_i$, and  $\left(\bm{\mu}\right)_{n-t}=(\overline \mu_{n-t},\overline \mu_{n-t},\cdots,\overline \mu_{n-t})^T\in\mathcal{H}^{n-t}$ where $\overline \mu_{n-t}=\frac{1}{n-t}\sum_{i=t+1}^{n}\mu_i$. The norm in spaces $\mathcal{H}^t$ and $\mathcal{H}^{n-t}$ are defined in the same way as that in $\mathcal{H}$. Write  $\left(\bm\epsilon\right)_i^t=(\epsilon_1,\cdots,\epsilon_t)^T$. Following \cite{arlot2012kernel}, define the operator $\Pi$ s.t.
$\Pi \left(\bm\epsilon\right)_i^t:= \arg\min_{f=(f_1,f_2,\cdots,f_t)\in \mathcal{H}^t, f_1=f_2=\cdots=f_t}\{\|f-\left(\bm\epsilon\right)_i^t\|^2\}=\left(\frac{1}{t}\sum_{i=1}^{t}\epsilon_{i}, \frac{1}{t}\sum_{i=1}^{t}\epsilon_{i}, \cdots,\frac{1}{t}\sum_{i=1}^{t}\epsilon_{i}\right)^T$ and the proof of the second equality can be found in Appendix A.1 of \cite{arlot2012kernel}. With some slight abuse of notation, we write $\mu_0=\E_{y\sim F_0}\phi(y)$ and $\mu_1=\E_{y\sim F_1}\phi(y)$.  Define $v_0=\E_{y\sim F_0}\|\epsilon\|^2$ and $v_1=\E_{y\sim F_1}\|\epsilon\|^2$. We use $\xrightarrow{P}$ to denote convergence in probability. In a single change point, we write $\tau^*$ also as $t^*$.

\subsection{Some Useful Results}
The following are some useful results which will be utilized in later proofs.
\begin{lemma}[Proposition 1 from \cite{arlot2012kernel}]\label{appendix_lemma:bound_noise_kernel_bd}
	If (1) $\|z_i\|^2\leq M^2,\,\forall i$ and (2) $z_i$'s are independent, then, for any $x>0$, 
	$$
	\P\left( \left|\,\,\frac{1}{n}\left\|\sum_{i=1}^{n}\epsilon_i\right\|^2 - \frac{1}{n}\E \left\|\sum_{i=1}^{n}\epsilon_i\right\|^2 \,\,\right|\leq \frac{14M^2}{3}\left(x+2\sqrt{2x}\right)\right)\geq 1-2e^{-x}	
	$$
\end{lemma}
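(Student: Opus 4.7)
The plan is to reduce the quadratic quantity $\frac{1}{n}\|\sum_i \epsilon_i\|_{\mathcal{H}}^2$ to two independently tractable pieces by expanding
\[
\frac{1}{n}\Big\|\sum_{i=1}^n \epsilon_i\Big\|_{\mathcal{H}}^2 \;=\; \frac{1}{n}\sum_{i=1}^n \|\epsilon_i\|_{\mathcal{H}}^2 \;+\; \frac{2}{n}\sum_{1\le i<j\le n}\langle \epsilon_i,\epsilon_j\rangle_{\mathcal{H}}.
\]
Since the $\epsilon_i$ are independent and centered in $\mathcal{H}$, the cross terms have zero mean, so $\E\,\frac{1}{n}\|\sum_i\epsilon_i\|^2 = \frac{1}{n}\sum_i \E\|\epsilon_i\|^2$, and the target deviation splits cleanly into a centered diagonal sum plus a degenerate second-order chaos.

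The hypothesis $\|\widetilde k(y_i,y_i)\|^2\le M^2$ translates to the uniform bound $\|\epsilon_i\|_{\mathcal{H}}^2\le M^2$, so each real summand $\|\epsilon_i\|^2-\E\|\epsilon_i\|^2$ lies in an interval of length $M^2$ with variance at most $M^4$. I would therefore handle the diagonal part $\frac{1}{n}\sum_i(\|\epsilon_i\|^2-\E\|\epsilon_i\|^2)$ by a scalar Bernstein inequality for bounded independent summands, producing a tail of Bernstein shape $M^2\sqrt{2x/n}+M^2 x/(3n)$ with probability at least $1-2e^{-x}$.

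For the off-diagonal piece I would exploit the supremum representation $\|\sum_i\epsilon_i\|_{\mathcal{H}}=\sup_{\|f\|_{\mathcal{H}}\le 1}\sum_i\langle f,\epsilon_i\rangle_{\mathcal{H}}$: for each $f$ in the unit ball, $\langle f,\epsilon_i\rangle$ is centered, bounded by $M$ in absolute value, and has variance at most $M^2$. Bousquet's version of Talagrand's inequality then yields sub-Gaussian-plus-linear deviations of $\|\sum_i\epsilon_i\|_{\mathcal{H}}$ around its mean. I would then square via $|a^2-b^2|=|a-b|(a+b)$ together with the Jensen bound $\E\|\sum_i\epsilon_i\|_{\mathcal{H}}\le\sqrt{\sum_i\E\|\epsilon_i\|^2}\le\sqrt{nM^2}$ to transfer this to a Bernstein-shape bound on the off-diagonal chaos term. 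Equivalently, one may invoke the Arcones--Gin\'e / Houdr\'e--Reynaud-Bouret exponential bound for canonical degenerate U-statistics of order two with bounded kernel $h(y,y')=\langle\phi(y)-\mu,\phi(y')-\mu\rangle_{\mathcal{H}}$, $|h|\le M^2$, which produces the same rate.

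A union bound over the two pieces and a careful collection of constants then yields the stated envelope $\tfrac{14M^2}{3}(x+2\sqrt{2x})$, with the square-root term coming from the variance proxy and the linear term from the bounded-increment contribution. The main obstacle is the off-diagonal concentration: squaring the Bousquet tail (or, equivalently, tracking the numerical constants through the Arcones--Gin\'e chaos bound) while ending up with exactly the coefficient $14/3$ and the factor $2\sqrt 2$ is the delicate part of the argument, and is the reason the result is invoked as Proposition~1 of \cite{arlot2012kernel} rather than reproved here.
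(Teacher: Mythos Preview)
The paper provides no proof of this lemma: it is stated verbatim as ``Proposition~1 from \cite{arlot2012kernel}'' and used as a black box, so there is nothing in the paper to compare your sketch against. You correctly recognize this at the end of your proposal.

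Your outline---split $\frac{1}{n}\|\sum_i\epsilon_i\|^2$ into the diagonal sum $\frac{1}{n}\sum_i\|\epsilon_i\|^2$ and the off-diagonal chaos, Bernstein on the first, Talagrand/Bousquet (or a degenerate $U$-statistic exponential bound) on the second---is indeed the standard route to such kernel concentration results and matches how Arlot--Celisse--Harchaoui proceed. One small slip: from $\widetilde k(y_i,y_i)\le M^2$ you get $\|\phi(y_i)\|_{\mathcal H}\le M$, but $\epsilon_i=\phi(y_i)-\mu^i$ with $\|\mu^i\|_{\mathcal H}\le M$, so in general $\|\epsilon_i\|_{\mathcal H}\le 2M$, not $M$. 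This only affects the bookkeeping of constants, which you already flag as the delicate part, and does not change the structure of the argument.
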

From Equation (19) in \cite{arlot2012kernel}, $\frac{1}{n}\E \|\sum_{i=1}^{n}\epsilon_i\|^2\leq CM^2$ where $C=1$ if there is no change point and $C=2$ if there is a change point. Thus, we have
\begin{equation}\label{eq:deviation_bounded}
\P\left( \frac{1}{\sqrt{n}}\left\|\sum_{i=1}^{n}\epsilon_i\right\|
\leq
\sqrt{CM^2+\frac{14M^2}{3}\left(x+2\sqrt{2x}\right)}
\right)\geq 1-2e^{-x}	
\end{equation}

\begin{lemma}[Lemma 7.10 from \cite{garreau2018consistent}]\label{lemma:deviation_exp_bounded}
	If (1) there exists a positive constant $V<+\infty$ s.t. $\max_{1\leq i\leq n}\E\|\epsilon_i\|^2\leq V$,  (2) $z_i$'s are independent, then, for any $x>0$,
	$$
	\P\left(\left\|\sum_{i=1}^{n}\epsilon_i\right\|\leq e^{x/2}\sqrt{nV}\right)\geq 1-e^{-x}
	$$
\end{lemma}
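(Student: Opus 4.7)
The plan is to prove this via a second-moment bound combined with Markov's inequality, which is the standard route for this kind of Hilbert-space deviation statement when only a variance-type hypothesis is available.

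First I would note that the $\epsilon_i$'s are centered (by construction, $\epsilon_i = z_i - \mu^i$ with $\mu^i = \E z_i$) and independent. This lets me expand the second moment of $\|\sum_{i=1}^n \epsilon_i\|_{\mathcal{H}}^2$ by linearity of the inner product:
\begin{equation*}
\E\left\|\sum_{i=1}^n \epsilon_i\right\|_{\mathcal{H}}^2 = \sum_{i=1}^n \E\|\epsilon_i\|_{\mathcal{H}}^2 + \sum_{i\neq j}\langle \E\epsilon_i,\E\epsilon_j\rangle_{\mathcal{H}} = \sum_{i=1}^n \E\|\epsilon_i\|_{\mathcal{H}}^2,
\end{equation*}
where the cross terms vanish because $\E\epsilon_i = 0$ and the $\epsilon_i$'s are independent. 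Using the hypothesis $\max_{1\le i\le n}\E\|\epsilon_i\|_{\mathcal{H}}^2 \le V$, this yields $\E\|\sum_{i=1}^n \epsilon_i\|_{\mathcal{H}}^2 \le nV$.

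Then I would apply Markov's inequality to the nonnegative random variable $\|\sum_{i=1}^n \epsilon_i\|_{\mathcal{H}}^2$ at level $e^x \cdot nV$:
\begin{equation*}
\P\left(\left\|\sum_{i=1}^n \epsilon_i\right\|_{\mathcal{H}}^2 \ge e^x \cdot nV\right) \le \frac{\E\|\sum_{i=1}^n \epsilon_i\|_{\mathcal{H}}^2}{e^x nV} \le e^{-x}.
\end{equation*}
Taking square roots on the event inside the probability gives the stated bound.

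There is no real obstacle here; the only subtlety is ensuring that the Hilbert-space-valued random variables $\epsilon_i$ are Bochner integrable so that the expansion of the second moment is justified, but this follows immediately from Assumption 2 (which bounds $\|\widetilde k(y_i,y_i)\|$, and hence $\|\epsilon_i\|_{\mathcal{H}}$, almost surely) together with the finite second-moment hypothesis $\E\|\epsilon_i\|_{\mathcal{H}}^2 \le V$. Since this is cited directly from Garreau and Arlot, I would simply give a one-paragraph self-contained proof along the above lines rather than re-proving from scratch.
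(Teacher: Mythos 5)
Your proof is correct: since the $\epsilon_i$ are centered and independent, $\E\bigl\|\sum_{i=1}^n\epsilon_i\bigr\|_{\mathcal{H}}^2=\sum_{i=1}^n\E\|\epsilon_i\|_{\mathcal{H}}^2\le nV$, and Markov's inequality applied to the squared norm at level $e^x nV$ gives exactly the stated bound. The paper does not prove this lemma itself but imports it from \citet{garreau2018consistent}, where Lemma 7.10 is established by the same second-moment/Markov argument, so your approach is essentially identical to the source's.
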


Now we are ready to present proofs to theories in the main article.
\subsection{Proof of Theorem \ref{thm:S1_null}}
Theorem \ref{thm:S1_null} is a direct consequence of the following Lemma:
\begin{lemma}\label{lemma:S1_null_unscaled}
	Under the null, if distance $d$ satisfies 
	
	(1) is a semi-metric of negative type, and 
	
	(2) $\E_y\left(\E_{y'}d(y,y')-\frac{1}{2}\E_{y,y'}d(y,y')\right)^{2+\delta}<+\infty$ for some $\delta>0$,
	
	(3) $\E_{y,y'}\left[d(y,y')-\E_{\tilde y}d(y',\tilde y)-\E_{\tilde y}d(y,\tilde y)+\E_{\tilde y,\tilde\tilde y}d(\tilde y,\tilde\tilde y)\right]^2<+\infty$, \\
	then for any ${0< \rho< 1}$, as $n\rightarrow \infty$,
	\begin{equation*}
	n\rho^2(1-\rho)^2\left(\bar d_A(\lceil n\rho\rceil)-\frac{1}{2}\bar d_{B_1}(\lceil n\rho\rceil )-\frac{1}{2}\bar d_{B_2}(\lceil n\rho\rceil )\right)\xrightarrow{w} \sum_{l=1}^\infty\lambda_l\left(W_l^0(\rho)^2-\rho(1-\rho)\right)
	\end{equation*}
	where $W_l^0(\cdot)$'s are independent Brownian bridges, $\lambda_l$'s are eigenvalues of $\widetilde k$ which is defined in Equation \eqref{eq:mercer_expansion}.
\end{lemma}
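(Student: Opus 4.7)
The plan is to combine an exact algebraic decomposition of $T_1$ with a Hilbert-space functional central limit theorem.

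First, the semi-metric-of-negative-type assumption (condition (1)) together with Lemma \ref{lemma:distance_induced_kernel} allows us to identify $d(y_i,y_j)=\|\phi(y_i)-\phi(y_j)\|_{\mathcal{H}}^2$ for the feature map $\phi$ of \eqref{def:phi}. Expanding each squared norm inside $\bar d_{A(t)}$, $\bar d_{B_1(t)}$ and $\bar d_{B_2(t)}$ and collecting the $\|\phi(y_i)\|_\mathcal{H}^2$ contributions, I obtain the exact identity
\[
T_1(t) \;=\; \|\bar\phi(y)_{t-}-\bar\phi(y)_{t+}\|_{\mathcal{H}}^2 \;-\; \frac{V(\phi(y),\|\cdot\|_{\mathcal{H}})_{t-}}{t} \;-\; \frac{V(\phi(y),\|\cdot\|_{\mathcal{H}})_{t+}}{n-t},
\]
which sharpens the $\mathcal{O}_p(n^{-1})$ remainder of \eqref{eq:T1_CUSUM}. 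Under $H_0$ I write $\phi(y_i)=\mu_0+\epsilon_i$ with $\epsilon_i$ iid centered elements of $\mathcal{H}$, so the main term above becomes $\|\bar\epsilon_{t-}-\bar\epsilon_{t+}\|_{\mathcal{H}}^2$.

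Second, a direct rearrangement with $t=\lceil n\rho\rceil$ shows
\[
\sqrt{n}\,\rho(1-\rho)\,(\bar\epsilon_{t-}-\bar\epsilon_{t+}) \;=\; \frac{1}{\sqrt{n}}\Bigl(\sum_{i=1}^{t}\epsilon_i-\frac{t}{n}\sum_{i=1}^{n}\epsilon_i\Bigr) + o_p(1).
\]
Condition (2) gives $\E\|\phi(y)\|_{\mathcal{H}}^{4+2\delta}=\E|\widetilde k(y,y)|^{2+\delta}<\infty$, so the Hilbert-space functional CLT of \citet{tewes2017change} yields $n^{-1/2}\sum_{i\le\lfloor nu\rfloor}\epsilon_i \xrightarrow{w} W(u)$ in $D([0,1],\mathcal{H})$, where $W$ is a Brownian motion whose covariance operator has eigenpairs $(\lambda_l,e_l)_{l\ge1}$. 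Evaluating at $u=\rho$ and $u=1$ and invoking the continuous mapping theorem deliver $\sqrt{n}\,\rho(1-\rho)(\bar\epsilon_{t-}-\bar\epsilon_{t+}) \xrightarrow{w} W^0(\rho):=W(\rho)-\rho W(1)$, a Brownian bridge in $\mathcal{H}$; squaring the norm then gives
\[
n\rho^2(1-\rho)^2\|\bar\epsilon_{t-}-\bar\epsilon_{t+}\|_{\mathcal{H}}^2 \;\xrightarrow{w}\; \|W^0(\rho)\|_{\mathcal{H}}^2.
\]

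Third, condition (3) places $\widetilde k$ in $L^2(F_0\otimes F_0)$, so Mercer's theorem yields the eigen-expansion \eqref{eq:eigen_decomp} with $\sum_l \lambda_l^2<\infty$, while condition (2) strengthens this to $\sum_l\lambda_l=\E\widetilde k(y,y)<\infty$. The Karhunen--Lo\`eve expansion then writes $W^0(\rho)=\sum_l\lambda_l^{1/2}W_l^0(\rho)\,e_l$ with $(W_l^0)_l$ independent scalar Brownian bridges, so $\|W^0(\rho)\|_{\mathcal{H}}^2=\sum_l\lambda_l W_l^0(\rho)^2$. Meanwhile the law of large numbers gives $V(\phi(y),\|\cdot\|_{\mathcal{H}})_{t\pm}\xrightarrow{p}\sum_l\lambda_l$, so the correction satisfies $n\rho^2(1-\rho)^2\bigl[V(\phi(y),\|\cdot\|_{\mathcal{H}})_{t-}/t+V(\phi(y),\|\cdot\|_{\mathcal{H}})_{t+}/(n-t)\bigr]\xrightarrow{p}\rho(1-\rho)\sum_l\lambda_l$. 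Slutsky's theorem assembles the pieces into the stated limit.

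The principal technical step is invoking the infinite-dimensional FCLT and composing it with the squared-norm functional evaluated at the fixed time $\rho$; the moment conditions (2) and (3) are tailored precisely to secure both tightness of the partial-sum process in $\mathcal{H}$ and the $L^2$-convergence of the eigenseries appearing on the right-hand side.
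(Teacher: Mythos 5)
Your proposal is correct and follows essentially the same route as the paper's proof: identify $d(y_i,y_j)=\|\phi(y_i)-\phi(y_j)\|_{\mathcal{H}}^2$ via the distance-induced kernel, decompose $T_1(t)$ exactly into the squared CUSUM term $\|\bar\phi(y)_{t-}-\bar\phi(y)_{t+}\|_{\mathcal{H}}^2$ minus within-segment variance corrections, invoke the Hilbert-space (F)CLT of \citet{tewes2017change} together with the eigen-decomposition to get $\sum_l\lambda_l W_l^0(\rho)^2$, and use a law of large numbers plus Slutsky to turn the correction terms into $-\rho(1-\rho)\sum_l\lambda_l$. The only cosmetic difference is that you apply the continuous mapping theorem to the partial-sum process yourself, whereas the paper cites Corollary 5.2.1 of \citet{tewes2017change} for the squared weighted difference directly; the substance of the argument, including the roles of conditions (2) and (3), matches the paper's.
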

\begin{proof}
	From assumption (1), we know that Lemma \ref{lemma:distance_induced_kernel} and Equation \ref{eq:mercer_expansion} holds. 
	
	First notice that assumption (2) is equivalent to $\E\|\phi(y)\|^{4+\delta'}<\infty$ for some $\delta'>0$. Since our data are i.i.d under the null,
	from Theorem 16 in \cite{tewes2017change}, by directly treating $\{\phi(y_i)\}$ as the observatons in Hilbert space $\mathcal{H}$, we have
	\begin{equation*}
	\frac{1}{\sqrt{n}}\sum_{i=1}^{\lceil n\rho \rceil}(\phi(y_i)-\mu)\xrightarrow{w}\bm W(\rho)
	\end{equation*}
	where $\bm W(\rho)$ is a Brownian motion in $\mathcal{H}$ and $\bm W(1)$ has the covariance operator $\Sigma$: $\mathcal{H}\rightarrow \mathcal{H}$, defined by
	$$
	\langle\Sigma \phi(y), \phi(y')\rangle
	=\E_{y''} \left[\langle \phi(y'')-\E\phi(y''), \phi(y')\rangle\langle \phi(y'')-\E\phi(y''), \phi(y')\rangle\right], \quad\forall y,y'\in \mathcal{H}
	$$
	From the definition of $\phi(\cdot)$ and $\widetilde k$ in Section \ref{sec:connections}, and the fact that $\E\phi(y)=0$, we have
	$$
	\langle\Sigma \phi(y), \phi(y')\rangle
	=\E_{y''}\sum_{l,m}\phi_l(y)\phi_m(y')\left[\phi_l(y'')\phi_m(y'')\right]
	=\sum_l\lambda_l\phi_l(y_1)\phi_l(y_2),
	$$
	as long as the last quantity is well-defined.
	Thus, we know $\bm W(\rho)=\left(\sqrt{\lambda_1}W_1(\rho),\sqrt{\lambda_2}W_2(\rho),\cdots\right)^T$ where $W_l(\rho)$ and $W_m(\rho)$ are independent Brownian motions if $l\neq m$. Thus, a direct consequence of Corollary 5.2.1 in \cite{tewes2017change} is that 
	\begin{equation}\label{appendix_eq:S1_null_eq1}
	    \frac{1}{n}\left\|\frac{n-t}{n}\sum_{i=1}^{t}\phi(y_i)-\frac{t}{n}\sum_{i=t+1}^{n}\phi(y_i)\right\|^2
	\xrightarrow{w} \left\|\bm W(\rho)-\rho \bm W(1)\right\|^2
	=\sum_l\lambda_l W^0_l(\rho)^2
	\end{equation}
	After some tedious calculations, we know
	\begin{equation}\label{appendix_eq:S1_null_eq2}
	\begin{split}
	&\frac{1}{n}\left\|\frac{n-t}{n}\sum_{i=1}^{t}\phi(y_i)-\frac{t}{n}\sum_{i=t+1}^{n}\phi(y_i)\right\|^2
	=\frac{(n-t)^2t^2}{n^3}\left[\frac{1}{t(n-t)} d_{A(t)}-\frac{1}{2t^2}d_{B_1(t)}-\frac{1}{2(n-t)^2}d_{B_2(t)}\right]
	\\=&\frac{(n-t)^2t^2}{n^3}\left[\frac{1}{t(n-t)} d_{A(t)}-\frac{1}{2t(t-1)}d_{B_1(t)}-\frac{1}{2(n-t)^2}d_{B_2(t)}\right]
	+\frac{(n-t)^2}{2n^3(t-1)}d_{B_1(t)}+\frac{t^2}{2n^3(n-t-1)}d_{B_1(t)}
	\\\stackrel{(a)}{=}&
	n\rho^2(1-\rho)^2\left[\bar d_{A(\lceil n\rho\rceil)}-\frac{1}{2}\bar d_{B_1(\lceil n\rho\rceil)}-\frac{1}{2}\bar d_{B_2(\lceil n\rho\rceil)}\right]
	+\frac{(n-t)^2t}{n^3(t-1)}\sum_{i=1}^{t}\|\hat \epsilon_i\|^2+\frac{t^2(n-t)}{n^3(n-t-1)}\sum_{i=t+1}^{n}\|\hat \epsilon_i\|^2
	\end{split}
	\end{equation}
	where (a) follows from the fact that $d_{B_1(t)}=2t\sum_{i=1}^{t}\|\hat \epsilon_i\|^2$ with $\hat\epsilon_i=\epsilon_i-\frac{1}{t} \sum_{l=1}^t\epsilon_l$ and $d_{B_2(t)}=2(n-t)\sum_{i=t+1}^{n}\|\hat \epsilon_i\|^2$ with $\hat\epsilon_i=\epsilon_i-\frac{1}{n-t} \sum_{l=t+1}^{n}\epsilon_l$. 
	
	If $t\rightarrow\infty,n-t\rightarrow\infty$ as $n\rightarrow \infty$, we know that
	\begin{equation}\label{appendix_eq:S1_null_eq3}
	    \begin{split}
	        \frac{1}{t}\sum_{i=1}^{t}\|\hat \epsilon_i\|^2 -\E\|\epsilon\|^2
	&= \frac{1}{t}\sum_{i=1}^{t}\left(\|\hat \epsilon_i\|^2 -\|\epsilon_i\|^2\right) + \frac{1}{t}\sum_{i=1}^{t}\|\epsilon_i\|^2-\E\|\epsilon\|^2
	\\&= \left\|\frac{1}{t}\sum_{i=1}^t\epsilon_i\right\|^2 + \frac{1}{t}\sum_{i=1}^{t}\left(\|\epsilon_i\|^2-\E\|\epsilon\|^2\right)
	\xrightarrow{P}
	0
	    \end{split}
	\end{equation}
	where the last convergence in probability follows from Lemma \ref{lemma:deviation_exp_bounded} (assumption 3 implies that condition (1) in Lemma \ref{lemma:deviation_exp_bounded} is satisfied) and law of large numbers (assumption 2 implies that $Var(\|\epsilon_i\|^2)<+\infty$). 
	Similarly we have
	\begin{equation}\label{appendix_eq:S1_null_eq4}
	    \frac{1}{n-t}\sum_{i=t+1}^{n}\|\hat \epsilon_i\|^2 -\E\|\epsilon\|^2
	= \frac{1}{n-t}\sum_{i=t+1}^{n}\left(\|\hat \epsilon_i\|^2 -\|\epsilon_i\|^2\right) + \frac{1}{n-t}\sum_{i=t+1}^{n}\|\epsilon_i\|^2-\E\|\epsilon\|^2
	\xrightarrow{P}
	0
	\end{equation}
	Since  $\E\|\epsilon\|^2=\sum_l\lambda_l$, combining Equation \eqref{appendix_eq:S1_null_eq1}, \eqref{appendix_eq:S1_null_eq2}, \eqref{appendix_eq:S1_null_eq3} and \eqref{appendix_eq:S1_null_eq4},
	we have 
	$$
	n\rho^2(1-\rho)^2\left[\bar d_{A(\lceil n\rho\rceil)}-\frac{1}{2}\bar d_{B_1(\lceil n\rho\rceil)}-\frac{1}{2}\bar d_{B_2(\lceil n\rho\rceil)}\right]
	\xrightarrow{w}
	\sum_l\lambda_l \left(W^0_l(\rho)^2 - \rho(1-\rho)\right)
	$$
	It is easy to check that this approximation also holds for $t=\O(1),n\rightarrow\infty$ or $n-t=\O(1),n\rightarrow\infty$, where we have $\frac{(n-t)^2t}{n^3(t-1)}\sum_{i=1}^{t}\|\hat \epsilon_i\|^2+\frac{t^2(n-t)}{n^3(n-t-1)}\sum_{i=t+1}^{n}\|\hat \epsilon_i\|^2\xrightarrow{P} 0=\rho(1-\rho)$.
	
	Now we want to make sure that $\sum_l\lambda_l \left(W^0_l(\rho)^2 - \rho(1-\rho)\right)$ is well defined. Notice that 
	$$
	\E\left[\sum_l\lambda_l \left(W^0_l(\rho)^2 - \rho(1-\rho)\right)\right]=0
	$$
	$$
	Var\left(\sum_l\lambda_l \left(W^0_l(\rho)^2 - \rho(1-\rho)\right)\right)=2\sum_l\lambda_l^2(1-\rho)^2\rho^2\stackrel{(b)}{<}+\infty
	$$
	where (b) follows from assumption (3) because assumption (3) is equivalent to $$\int_y\int_{y'}\tilde k(y,y')dF_0(y)dF_0(y')<+\infty$$
	which implies that $\sum_l\lambda_l^2<+\infty$.
\end{proof}

\subsection{Proof of Theorem \ref{thm:S2_null}}
Theorem \ref{thm:S2_null} is a direct consequence of Lemma \ref{lemma:S2_null_unscaled}.
\begin{lemma}\label{lemma:S2_null_unscaled}
	Under the null, if distance $d$ satisfies 
	
	(1) is a semi-metric of negative type,  
	
	(2) $\E d(y_i,y_j)\leq M^2,\,\forall i,j$,  and 
	
	(3) $\E_y|\E_{y'}d(y,y')-\E_{y,y'}d(y,y')|^{2+\delta}<+\infty$ for some $\delta>0$, \\
	then, for any ${0<\rho<1}$,
	\begin{equation}
	\frac{\sqrt{n}\rho(1-\rho)}{2\widehat s_n}(\bar d_{B_1(\lceil n\rho\rceil)}-\bar d_{B_2(\lceil n\rho\rceil)})\xrightarrow{w} W^0(\rho),\,n\rightarrow \infty
	\end{equation}
	where $W^0(\cdot)$ is a Brownian bridge.
\end{lemma}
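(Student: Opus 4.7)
The strategy is to reduce $\bar d_{B_1(t)}-\bar d_{B_2(t)}$ (with $t=\lceil n\rho\rceil$) to a centered sum of i.i.d.\ real-valued random variables $\|\epsilon_i\|_{\mathcal{H}}^2$ by exploiting the feature-map representation, then invoke a scalar CLT and separately verify that $\hsn$ estimates the correct variance $s^2:=\Var(\|\epsilon\|_{\mathcal{H}}^2)$.

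First, by Lemma~\ref{lemma:distance_induced_kernel} together with assumption~(1), we may write $d(y_i,y_j)=\|\phi(y_i)-\phi(y_j)\|_{\mathcal{H}}^2=\|\epsilon_i-\epsilon_j\|_{\mathcal{H}}^2$ under $H_0$, where $\epsilon_i=\phi(y_i)$ are i.i.d.\ in $\mathcal{H}$ with mean zero (since $\widetilde k$ is the centered kernel). Expanding the double sum gives the identity
\begin{equation*}
\bar d_{B_1(t)}=\frac{2}{t-1}\sum_{i=1}^t\|\epsilon_i\|_{\mathcal{H}}^2-\frac{2t}{t-1}\|\bar\epsilon_{t-}\|_{\mathcal{H}}^2,
\end{equation*}
and an analogous formula for $\bar d_{B_2(t)}$. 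Assumption~(2) yields $\E\|\epsilon\|_{\mathcal{H}}^2=\E\widetilde k(y,y)\le M^2$, so Lemma~\ref{lemma:deviation_exp_bounded} applied to $\{\epsilon_i\}$ gives $\|\bar\epsilon_{t-}\|_{\mathcal{H}}^2=O_p(1/t)$ and $\|\bar\epsilon_{t+}\|_{\mathcal{H}}^2=O_p(1/(n-t))$, whose contribution is $o_p(1/\sqrt n)$ after the $\sqrt n\rho(1-\rho)$ scaling. Setting $X_i:=\|\epsilon_i\|_{\mathcal{H}}^2-\E\|\epsilon\|_{\mathcal{H}}^2$ and collecting these expansions produces
\begin{equation*}
\sqrt n\,\rho(1-\rho)\bigl[\bar d_{B_1(t)}-\bar d_{B_2(t)}\bigr]=\frac{2(1-\rho)}{\sqrt n}\sum_{i=1}^t X_i-\frac{2\rho}{\sqrt n}\sum_{i=t+1}^n X_i+o_p(1).
\end{equation*}

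The $X_i$ are i.i.d.\ mean-zero reals, and assumption~(3) provides $\E|X_i|^{2+\delta}<\infty$, so the two disjoint partial sums are jointly asymptotically normal with independent limits, yielding a $N(0,4s^2\rho(1-\rho))$ limit for the displayed combination, which matches the law of $2s\,W^0(\rho)$. For the scaling factor, I would expand $\bar d_i$ and $\bar d$ through $d(y_i,y_j)=\|\epsilon_i-\epsilon_j\|_{\mathcal{H}}^2$, use $\bar\epsilon_n=o_p(1)$ to discard the cross terms $\langle\epsilon_i,\bar\epsilon_n\rangle_{\mathcal{H}}$, and apply the $L^1$ LLN to $\|\epsilon_i\|_{\mathcal{H}}^2$ and $\|\epsilon_i\|_{\mathcal{H}}^4$ (the latter integrable because assumption~(3) controls the $(4+2\delta)$-th moment of $\|\epsilon\|_{\mathcal{H}}$); this gives $\hsn^2\xrightarrow{p}\E\|\epsilon\|_{\mathcal{H}}^4-(\E\|\epsilon\|_{\mathcal{H}}^2)^2=s^2$. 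Dividing by $2\hsn$ then delivers the claimed weak convergence to $W^0(\rho)$.

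The main obstacle is the bookkeeping of the lower-order terms: the quadratic noise pieces $\|\bar\epsilon_{t-}\|_{\mathcal{H}}^2,\|\bar\epsilon_{t+}\|_{\mathcal{H}}^2,\|\bar\epsilon_n\|_{\mathcal{H}}^2$, the $t/(t-1)-1$ corrections, and the cross terms $\langle\epsilon_i,\bar\epsilon_n\rangle_{\mathcal{H}}$ entering $\hsn^2$ all need to be shown to contribute only $o_p(1)$ after the $\sqrt n\rho(1-\rho)$ scaling. The $2+\delta$ moment hypothesis in assumption~(3) is indispensable on two fronts: finite variance of $\|\epsilon\|_{\mathcal{H}}^2$ for the scalar CLT, and finite $\E\|\epsilon\|_{\mathcal{H}}^4$ for the LLN driving $\hsn^2\to s^2$; assumption~(2) is precisely the hypothesis Lemma~\ref{lemma:deviation_exp_bounded} requires in order to pin down the $O_p(1/t)$ rate on the Hilbert-space averages.
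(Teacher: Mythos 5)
Your proposal is correct for the statement as written and follows essentially the same route as the paper: expand $d(y_i,y_j)=\|\epsilon_i-\epsilon_j\|_{\mathcal H}^2$ so that $\bar d_{B_1(t)}-\bar d_{B_2(t)}$ splits into a difference of partial sums of $\|\epsilon_i\|_{\mathcal H}^2$ plus recentering terms $\|\bar\epsilon_{t-}\|_{\mathcal H}^2,\|\bar\epsilon_{t+}\|_{\mathcal H}^2$ that are $o_p(n^{-1/2})$ after scaling (the paper's $U_2,U_3$, killed by Lemma \ref{lemma:deviation_exp_bounded}), and handle $\hsn\xrightarrow{p}s$ separately (the paper's Lemma \ref{prop:proof_of_S2_null}; your direct expansion of $\bar d_i$ and $\bar d$ is an equivalent computation). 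The one substantive divergence is the treatment of the main term: the paper invokes the CUSUM functional limit theorem of \citet{doukhan2012mixing} applied to the scalars $\|\epsilon_i\|_{\mathcal H}^2$, which gives weak convergence of the whole process in $\rho$ to the Brownian bridge, whereas your fixed-$\rho$ scalar CLT only identifies the one-dimensional marginal $N(0,\rho(1-\rho))$. That suffices for the lemma read literally ("for any $0<\rho<1$"), but the lemma is used to deduce Theorem \ref{thm:S2_null}(b), where the maximum over $\rho\in[\rho_0,\rho_1]$ requires process-level convergence plus the continuous mapping theorem; in your framework this is easily restored, since the $X_i=\|\epsilon_i\|_{\mathcal H}^2-\E\|\epsilon\|_{\mathcal H}^2$ are i.i.d.\ with finite variance, so Donsker's invariance principle upgrades your partial-sum argument to convergence in $D[\rho_0,\rho_1]$, with the $(2+\delta)$-moment condition matching the hypothesis of the functional result the paper cites.
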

\begin{proof}
	The proof follows from proof of Theorem 2.1 in \cite{lee2003cusum}.
	Write $t=\lceil n\rho\rceil$.
	\begin{equation*}\small
	\begin{split}
	&\frac{1}{\sqrt{n}\hsn}\frac{t(n-t)}{2n}\left(\db-\dbb\right)
	=\frac{1}{\sqrt{n}}\frac{s_n}{\hsn}\frac{1}{s_n}\left[ \frac{t}{t-1}\frac{n-t}{n}\sum_{i=1}^t\|\epsilon_i\|^2-\frac{n-t}{n-t-1}\frac{t}{n}\sum_{i=t+1}^n\|\epsilon_i\|^2\right]
	\\&+\frac{1}{\hsn}\frac{1}{\sqrt{n}}\frac{t}{t-1}\frac{n-t}{n}\sum_{i=1}^t\left(\|z_i-\bar z_t\|^2-\|\epsilon_i\|^2\right)-\frac{1}{\hsn}\frac{1}{\sqrt{n}}\frac{n-t}{n-t-1}\frac{t}{n}\sum_{i=t+1}^n\left(\|z_i-\bar z_{n-t}\|^2-\|\epsilon_i\|^2\right)
	\\&=U_1+U_2+U_3
	\end{split}
	\end{equation*}	
	
	Suppose that $t,n-t\rightarrow\infty$ when $n\rightarrow\infty$.
	
	First we show that $U_1\xrightarrow{w} W^0\left(\rho\right)$. Notice that assumption (3) in Lemma \ref{lemma:S2_null_unscaled} implies $\E|\|\epsilon\|^2-\E\|\epsilon\|^2|^{2+\delta}<+\infty$ for some $\delta>0$. Thus, by treating $\|\epsilon_i\|$ as a (univariate) variable, it is a direct consequence from Lemma 3.1 of \cite{doukhan2012mixing} that 
	$$
	\frac{1}{\sqrt{n}s_n}\left[\frac{t}{t-1}\frac{n-t}{n}\sum_{i=1}^t\|\epsilon_i\|^2-\frac{n-t}{n-t-1}\frac{t}{n}\sum_{i=t+1}^n\|\epsilon_i\|^2\right]
	\xrightarrow{w} W^0(\rho)
	$$
	Combined with Lemma \ref{prop:proof_of_S2_null}, we know that $U_1(t)\xrightarrow{w}W^0(\rho)$.
	
	Then we show that $U_2\xrightarrow{P} 0$.
	\begin{equation*}
	\begin{split}
	U_2=&\frac{1}{\sqrt{n}}\sum_{i=1}^{t}\left(\|z_{i}-\bar{z}_{t}\|^{2}-\|\epsilon_{i}\|^{2}\right)	=\frac{1}{\sqrt{n}}\sum_{i=1}^{t}\left(\|\epsilon_{i}-\bar{\epsilon}_{t}\|^{2}-\|\epsilon_{i}\|^{2}\right)
	=-\frac{1}{\sqrt{n}}\frac{1}{t}\|\sum_{i=1}^t\epsilon_i\|^2
	\xrightarrow{P}0
	\end{split}
	\end{equation*}
	where the last $\xrightarrow{P}$ follows from assumption (2) in Lemma \ref{lemma:S2_null_unscaled} and Lemma \ref{lemma:deviation_exp_bounded}.
	
	The fact that $T_3\xrightarrow{P}0$ can proved in a similar way.
	
	To sum, this means $\frac{1}{\sqrt{n}\hsn}\frac{t(n-t)}{2n}\left(\db-\dbb\right)\xrightarrow{w} W^0(\rho)$ if $t,n-t\rightarrow\infty$ when $n\rightarrow\infty$. And it is easy to show that this approximation also holds for $t=\O(1),n\rightarrow\infty$ or $n-t=\O(1),n\rightarrow\infty$. Thus, we conclude 
	$$\frac{1}{\sqrt{n}\hsn}\frac{t(n-t)}{2n}\left(\db-\dbb\right)\xrightarrow{w} W^0(\rho).$$
\end{proof}

\begin{lemma}\label{prop:proof_of_S2_null}
	If distance $d$ satisfies $Var(\|\phi(y_i)\|^2)\leq M^2$ for any $i$, we have that $\hsn\xrightarrow{P}c^*$ where $c^*$ is a bounded positive constant. Moreover, under the null,  we have $\frac{s_n}{\hsn}\xrightarrow{P} 1$. 
\end{lemma}
\begin{proof}
	The proof follows from proof of Lemma 3.3 in \cite{lee2003cusum}.
	Notice that
	$
	\hsn^2=\frac{1}{n}\sum_{i=1}^n\left(\|\hat\epsilon_i\|^2-\hat\mu_n\right)^2
	$
	where $\hat\mu_n=\frac{1}{n}\sum_{i=1}^n\|\hat\epsilon_i\|^2$ and $\hat\epsilon_i=z_i-\bar z_n$. Denote $\wt\epsilon_i=\mu_i-\frac{1}{n}\sum_{i=1}^{n}\mu_i+\epsilon_i$, $\widetilde\mu_n=\frac{1}{n}\sum_{i=1}^n\|\wt\epsilon_i\|^2$. Denote $\tsn^2 =\frac{1}{n}\sum_{i=1}^{n}(\|\wt\epsilon_{i}\|^{2}-\wt\mu_{n})^{2}$.  Notice that $\hat\epsilon_i=\wt\epsilon_i-\bar\epsilon_n$. Notice that $Var(\|\phi(y_i)\|^2)\leq M^2$ is equivalent to $\E\|\phi(y)\|^2+\E\|\phi(y)-\E\|\phi(y)\|\|^2<\infty$.
	
	Notice that
	\begin{equation*}
	\begin{split}
	{\hsn}^2
	=&\frac{1}{n}\sum_{i=1}^{n}(\|\hat{\epsilon_{i}}\|^{2}-\|\wt\epsilon_{i}\|^{2}+\|\wt\epsilon_{i}\|^{2}-\wt\mu_{n}+\wt\mu_{n}-\hat{\mu_{n}})^{2}
	\\=&\frac{1}{n}\sum_{i=1}^{n}(\|\wt\epsilon_{i}\|^{2}-\wt\mu_{n})^{2}+\left(\|\hat{\epsilon_{i}}\|^{2}-\|\wt\epsilon_{i}\|^{2}\right)^{2}
	+(\wt\mu_{n}-\hat{\mu_{n}})^{2}
	\\&+2(\|\hat{\epsilon_{i}}\|^{2}-\|\wt\epsilon_{i}\|^{2})(\|\wt\epsilon_{i}\|^{2}-\wt\mu_{n})+
	2(\|\hat{\epsilon_{i}}\|^{2}-\|\wt\epsilon_{i}\|^{2})(\wt\mu_{n}-\hat{\mu_{n}})+2(\|\wt\epsilon_{i}\|^{2}-\wt\mu_{n})(\wt\mu_{n}-\hat{\mu_{n}})
	\\:=&\tsn^2+R_{1}+R_{2}+2R_{3}+2R_{4}+2R_{5}
	\end{split}
	\end{equation*}
	where
	\begin{equation*}
	\begin{split}
	R_{1}
	&=\frac{1}{n}\sum_{i=1}^{n}\left(\|\hat{\epsilon_{i}}\|^{2}-\|\wt\epsilon_{i}\|^{2}\right)^{2}
	=\frac{1}{n}\sum_{i=1}^{n}\left(\|\wt\epsilon_{i}-\bar\epsilon_{n}\|^{2}-\|\wt\epsilon_{i}\|^{2}\right)^{2}
	\\&=\frac{1}{n}\sum_{i=1}^{n}\left(\|\bar\epsilon_n\|^{2}+2\langle\wt\epsilon_{i},\,\bar\epsilon_n\rangle\right)^{2}
	\leq\frac{1}{n}\sum_{i=1}^{n}\left[2\|\bar\epsilon_n\|^{4}+2\left(2\langle\wt\epsilon_{i},\,\bar\epsilon_n\rangle\right)^{2}\right]
	\\&\leq 2\|\bar\epsilon_{n}\|^{4}+\left(\frac{8}{n}\sum_{i=1}^{n}\|\wt\epsilon_{i}\|^{2}\right)\times\|\bar\epsilon_n\|^{2}
	\end{split}
	\end{equation*}
	Since $\frac{1}{n}\sum_{i=1}^n\|\wt\epsilon_i\|^2\xrightarrow{P}\E\|\wt\epsilon_i\|^2=\E\|\mu_i-\frac{1}{n}\sum_{i=1}^n\mu_i+\epsilon_i\|^2=\E\|\phi(y_i)-\frac{1}{n}\sum_{i=1}^n\mu_i\|^2\leq 2\E\|\phi(y_i)\|^2+2\E\|\frac{1}{n}\sum_{i=1}^n\mu_i\|^2\leq2\E\|\phi(y_i)\|^2+2\E\|\frac{1}{n}\sum_{i=1}^n(\mu_i+\epsilon_i)\|^2\leq C$, and $\|\bar\epsilon_n\|\xrightarrow{P}0$ (Lemma \ref{lemma:deviation_exp_bounded}), we have $R_1\xrightarrow{P}0$.
	\begin{equation*}
	\begin{split}
	R_{2}&=\left(\frac{1}{n}\sum_{i=1}^{n}\|\hat{\epsilon_{i}}\|^{2}-\frac{1}{n}\sum_{i=1}^{n}\|\wt\epsilon_{i}\|^{2}\right)^{2}=\left(\frac{1}{n}\sum_{i=1}^{n}\|\bar\epsilon_n \|^{2}+\langle\frac{2}{n}\sum_{i=1}^{n}\epsilon_{i},\,-\bar \epsilon_n \rangle\right)^{2}=\|\bar \epsilon_n \|^{4}
	\end{split}
	\end{equation*}
	Since $\|\bar \epsilon_n \|\xrightarrow{P}0$,  we have $R_2\xrightarrow{P} 0$.
	\begin{equation*}
	\begin{split}
	|R_{3}|
	&=|\frac{1}{n}\sum_{i=1}^{n}(\|\hat{\epsilon_{i}}\|^{2}-\|\wt\epsilon_{i}\|^{2})(\|\wt\epsilon_{i}\|^{2}-\wt\mu_{n})|
	\\&\leq\sqrt{\frac{1}{n}\sum_{i=1}^{n}(\|\hat{\epsilon_{i}}\|^{2}-\|\wt\epsilon_{i}\|^{2})^{2}\frac{1}{n}\sum_{i=1}^{n}(\|\wt\epsilon_{i}\|^{2}-\wt\mu_{n})^{2}}
	\\&\leq\sqrt{R_{1}\left(\frac{1}{n}\sum_{i=1}^{n}\|\wt\epsilon_{i}\|^{4}-\wt\mu_{n}^{2}\right)}
	\end{split}
	\end{equation*}
	From Law of Large Numbers, we have 
	\begin{equation}\label{appendix_eq:S2_null_eq1}
	    \frac{1}{n}\sum_{i=1}^{n}\|\wt\epsilon_{i}\|^{4}-\wt\mu_{n}^{2}
	\xrightarrow{P}
	\rho^*Var_{F_0}\left(\left\|\phi(y)-\frac{1}{n}\sum_{i=1}^n\mu_i\right\|^2\right)+(1-\rho^*)Var_{F_1}\left(\left\|\phi(y)-\frac{1}{n}\sum_{i=1}^n\mu_i\right\|^2\right)
	:=c^*
	\end{equation}
	where the last quantity is well-defined because
	\begin{equation*}
	    \begin{split}
	        &Var_{F_0}\left(\left\|\phi(y)-\frac{1}{n}\sum_{i=1}^n\mu_i\right\|^2\right)
	=\E\left(\|\phi(y)\|^2-\E\|\phi(y)\|^2-2\left\langle\frac{1}{n}\sum_{i=1}^n\mu_i,\epsilon\right\rangle\right)^2
	\\&\leq 2\E\left(\|\phi(y)\|^2-\E\|\phi(y)\|^2\right)^2+2\E\left(2\left\langle\frac{1}{n}\sum_{i=1}^n\mu_i,\epsilon\right\rangle\right)^2
	\\&=2Var(\|\phi(y)\|^2)+8\E\|\epsilon\|^2\left\|\frac{1}{n}\sum_{i=1}^n\mu_i\right\|^2
	\\&\leq 2Var(\|\phi(y)\|^2)+8\E\|\phi(y)\|^2\frac{1}{n^2}\left(\sum_{i=1}^n\left\|\mu_i\right\|\right)^2
	<+\infty
	    \end{split}
	\end{equation*}
	Similarly, $Var_{F_1}\left(\|\phi(y)-\frac{1}{n}\sum_{i=1}^n\mu_i\|^2\right)<\infty$. Thus,
	$(c^*)^2$ is a bounded positive constant.
	Combined with $R_1\xrightarrow{P}0$, we have $R_3\xrightarrow{P}0$.
	\begin{equation*}
	\begin{split}
	|R_{4}|
	=\left|\frac{2}{n}\sum_{i=1}^{n}(\|\hat{\epsilon_{i}}\|^{2}-\|\epsilon_{i}\|^{2})(\tilde\mu_{n}-\hat\mu_{n})\right|
	=2|\tilde\mu_{n}-\hat\mu_{n}|\times\left|\frac{1}{n}\sum_{i=1}^{n}\|\hat{\epsilon_{i}}\|^{2}-\E\|\epsilon_{i}\|^{2}\right|
	\xrightarrow{P}0
	\end{split}
	\end{equation*}
	where the last $\xrightarrow{P}$ follows from the fact that $R_2=\left(\tilde\mu_n-\hat\mu_n\right)^2\rightarrow 0$, and  $\frac{1}{n}\sum_{i=1}^{n}\|\hat{\epsilon_{i}}\|^{2}-\E\|\epsilon_{i}\|^{2}\xrightarrow{P} 0$ (law of large numbers).
	$$
	|R_{5}|=\left|\frac{1}{n}\sum_{i=1}^{n}\left(\|\epsilon_{i}\|^{2}-\tilde\mu_{n}\right)\left(\tilde\mu_{n}-\hat\mu_{n}\right)\right|=0
	$$
	where the last equality follows from the definition of $\tilde\mu_{n}$.
	
	Combining the above, we know that $\hsn-\tsn\xrightarrow{P}0$ and thus, $\frac{\tsn}{\hsn}\xrightarrow{P}1$. Equation \eqref{appendix_eq:S2_null_eq1} says $\tsn\xrightarrow{P}c^*$. Thus, we have $\hsn\xrightarrow{P}c^*$. This completes the first part of the desired conclusion.
	
	For the second conclusion, notice under the null, $c^*=s_n$
	and thus, $\frac{s_n}{\hsn}\xrightarrow{P}1$.
\end{proof}

\subsection{Proof of Theorem \ref{thm:consistency_S1}}
\begin{proof}
First we have the following equation: $\forall t$,
\begin{equation*}
\begin{split}
T_1(t)
=& \frac{1}{t}\sum_{i=1}^{t}k(y_i,y_i)+\frac{1}{n-t}\sum_{i=t+1}^{n}k(y_i,y_i)-2\langle \overline{\phi(y)}_{t},\overline{\phi(y)}_{n-t}\rangle
\\&-\frac{1}{t-1}\left(\left\|(\bm{\mu}^*)_t-(\bm{\mu})_t\right\|^2\,+\,2\left\langle(\bm{\mu}^*)_t-(\bm{\mu})_t,\,(\bm{\epsilon})_1^t\right\rangle\,-\,\|\Pi(\bm{\epsilon})_1^t\|^2\,+\,\|(\bm{\epsilon})_1^t\|^2\right)
\\&-\frac{1}{n-t-1}\left(\|(\bm{\mu}^*)_{n-t}-(\bm{\mu})_{n-t}\|^2\,+\,2\left\langle(\bm{\mu}^*)_{n-t}-(\bm{\mu})_{n-t},\,(\bm{\epsilon})_{t+1}^n\right\rangle\,-\,\|\Pi(\bm{\epsilon})_{t+1}^n\|^2\,+\,\|(\bm{\epsilon})_{t+1}^n\|^2\right)
\end{split}
\end{equation*}
Plugging this into the basic inequality $S_1(\hat t)\geq S_1(t^*)$. 

When $t^*<\hat t$, we have
\begin{align}
0\,\leq &
\,\,\frac{t^*-\hat t}{n^2}\sum_{i=1}^{t^*}k(y_i,y_i)+\frac{n-\hat t-t^*}{n^2}\sum_{i=t^*+1}^{\hat t}k(y_i,y_i)+\frac{\hat t-t^*}{n^2}\sum_{i=\hat t+1}^{n}k(y_i,y_i)
\label{appendix_eq:consistency_s1_line1}
\\&
+ \frac{t^*-\hat t}{n^2}\sum_{i=1}^{t^*}\|\epsilon_i\|^2+\frac{n-\hat t-t^*}{n^2}\sum_{i=t^*+1}^{\hat t}\|\epsilon_i\|^2+\frac{\hat t-t^*}{n^2}\sum_{i=\hat t+1}^{n}\|\epsilon_i\|^2
\label{appendix_eq:consistency_s1_line2}
\\&
-\frac{\hat t(n-\hat t)}{n^2(\hat t-1)}\left\|\firstmu-\firstempmu\right\|^2-\frac{\hat t(n-\hat t)}{n^2(n-\hat t-1)}\left\|\secondmu-\secondempmu\right\|^2
\label{appendix_eq:consistency_s1_line3}
\\&
-\frac{2\hat t(n-\hat t)}{n^2(\hat t-1)}\left\langle\firstmu-\firstempmu,\firstep\right\rangle-\frac{2\hat t(n-\hat t)}{n^2(n-\hat t-1)}\left\langle\secondmu-\secondempmu,\secondep\right\rangle
\label{appendix_eq:consistency_s1_line4}
\\&
+\frac{\hat t(n-\hat t)}{n^2(\hat t-1)}\left\|\Pi\firstempep\right\|^2+\frac{\hat t(n-\hat t)}{n^2(n-\hat t-1)}\left\|\Pi\secondempep\right\|^2
\label{appendix_eq:consistency_s1_line5}
\\&
-\frac{t^*(n-t^*)}{n^2(t^*-1)}\left\|\Pi\firstep\right\|^2-\frac{t^*(n-t^*)}{n^2(n-t^*-1)}\left\|\secondep\right\|^2
\label{appendix_eq:consistency_s1_line6}
\\&
-\frac{2\hat t(n-\hat t)}{n^2}\left\langle\overline{\phi(y)}_{\hat t},\,\overline{\phi(y)}_{n-\hat t}\right\rangle
+\frac{2t^*(n-t^*)}{n^2}\left\langle\overline{\phi(y)}_{t^*},\,\overline{\phi(y)}_{n-t^*}\right\rangle
\label{appendix_eq:consistency_s1_line7}
\end{align}
Now we will bound each line seperately. 

For line \eqref{appendix_eq:consistency_s1_line1} and line \eqref{appendix_eq:consistency_s1_line2}, we have for any $x>0$, 
\begin{equation}\label{appendix_eq:consistency_s1_term1}
\begin{split}
&\eqref{appendix_eq:consistency_s1_line1}+\eqref{appendix_eq:consistency_s1_line2}
\\=&
\frac{t^*-\hat t}{n^2}\sum_{i=1}^{t^*}\left(\|\mu_0\|^2+2\langle\mu_0,\epsilon_i\rangle\right)+\frac{n-\hat t-t^*}{n^2}\sum_{i=t^*+1}^{\hat t}\left(\|\mu_1\|^2+2\langle\mu_1,\epsilon_i\rangle\right)+\frac{\hat t-t^*}{n^2}\left(\|\mu_1\|^2+2\langle\mu_1,\epsilon_i\rangle\right)
\\=&
\frac{t^*-\hat t}{n^2}t^*\|\mu_0\|^2+\frac{(\hat t-t^*)(2n-2\hat t-t^*)}{n^2}\|\mu_1\|^2
\\&+2\frac{t^*-\hat t}{n^2}\sum_{i=1}^{t^*}\langle\mu_0,\epsilon_i\rangle+2\frac{n-\hat t-t^*}{n^2}\sum_{i=t^*+1}^{\hat t}\langle\mu_1,\epsilon_i\rangle+2\frac{\hat t-t^*}{n^2}\sum_{i=\hat t+1}^{n}\langle\mu_1,\epsilon_i\rangle
\\\stackrel{(a)}{\leq}&
U_1
\\&+2\|\mu_0\|\times \frac{|t^*-\hat t|}{n^2}\times\|\sum_{i=1}^{t^*}\epsilon_i\|+2\|\mu_1\|\times \frac{|n-t^*-\hat t|}{n^2}\times\|\sum_{i=t^*+1}^{\hat t}\epsilon_i\|+2\|\mu_1\|\times \frac{|\hat t-t^*|}{n^2}\times\|\sum_{i=\hat t+1}^{n}\epsilon_i\|
\\\stackrel{(b)}{\leq}&
U_1+\frac{2|t^*-\hat t|}{n^2}\sqrt{\frac{14t^*}{3}}(x+\sqrt{2})M^2+ \frac{2|n-t^*-\hat t|}{n^2}\sqrt{\frac{14(\hat t-t^*)}{3}}(x+\sqrt{2})M^2
\\&+ \frac{2|\hat t-t^*|}{n^2}\sqrt{\frac{14(n-\hat t)}{3}}(x+\sqrt{2})M^2
\quad\quad\quad
(w.p. \geq 1-3e^{-x})
\\\stackrel{(c)}{\leq}&
U_1+6M^2\frac{1}{\sqrt{n}}\sqrt{\frac{\hat t-t^*}{n}}
\end{split}
\end{equation}
where (a) follows from Cauchy Schwarz Inequality and we define $U_1=\frac{t^*-\hat t}{n^2}t^*\|\mu_0\|^2+\frac{(\hat t-t^*)(2n-2\hat t-t^*)}{n^2}\|\mu_1\|^2$. Inequality (b) follows from Lemma \ref{appendix_lemma:bound_noise_kernel_bd} and notice that $d(y_i,y_j)\leq M^2$ implies $k(y_i,y_i)\leq M^2$ for any $y_i$ and thus, $\|\mu_0\|^2\leq M^2$ and $\|\mu_1\|^2\leq M^2$. Inequality (c) follows from the fact that $\frac{\hat t-t^*}{n}\leq\sqrt{\frac{\hat t-t^*}{n}}$.

For line \eqref{appendix_eq:consistency_s1_line3} and line \eqref{appendix_eq:consistency_s1_line4}, from Proposition 3 of \cite{arlot2012kernel}, we have  for any $\theta>0$ and $x>0$, with probability at least $1-4e^{-x}$,
\begin{equation*}
\begin{split}
\eqref{appendix_eq:consistency_s1_line4}
\,\,\,{\leq}\,\,\,
&\frac{2\hat t(n-\hat t)}{n^2(\hat t-1)}\left[\theta\|\firstmu-\firstempmu\|^2+\left(\frac{3v_0}{2}+\frac{4M^2}{3}\right)x\right]
\\&+
\frac{2\hat t(n-\hat t)}{n^2(\hat t-1)}\left[\theta\|\secondmu-\secondempmu\|^2+\left(\frac{3v_1}{2}+\frac{4M^2}{3}\right)x\right]
\end{split}
\end{equation*}
Take $\theta=\frac{1}{3}$, we have with probability at least $1-4e^{-x}$,
\begin{equation}\label{appendix_eq:consistency_s1_term2}
\begin{split}
\eqref{appendix_eq:consistency_s1_line3}+\eqref{appendix_eq:consistency_s1_line4}
\stackrel{(a)}{\leq}
-\frac{1}{3}\frac{t(n-t)}{n^2(t-1)}(t-t^*)\|\mu_0-\mu_1\|^2+\frac{17}{2n}M^2x
\end{split}
\end{equation}
where (a) follows from the fact that $v_0\leq M^2$.

For line \eqref{appendix_eq:consistency_s1_line5} and line \eqref{appendix_eq:consistency_s1_line6}, from Lemma \ref{appendix_lemma:bound_noise_kernel_bd}, we know that for any $x>0$, with probability at least $1-2e^{-x}$, we have
\begin{equation}\label{appendix_eq:consistency_s1_term3}
\eqref{appendix_eq:consistency_s1_line5} + \eqref{appendix_eq:consistency_s1_line6}
\leq\left(\frac{\hat t(n-\hat t)}{n^2(\hat t-1)}+\frac{\hat t(n-\hat t)}{n^2(n-\hat t-1)}\right)\left(M^2+\frac{14}{3}M^2(x+2\sqrt{2x})\right)
\leq \frac{1}{n}\frac{28}{3}M^2(\sqrt{x}+\sqrt{2})^2
\end{equation}

For line \eqref{appendix_eq:consistency_s1_line7}, we have
\begin{equation*}
\begin{split}
&-\frac{2\hat t(n-\hat t)}{n^2}\langle\overline{\phi(y)}_{\hat t},\,\overline{\phi(y)}_{n-\hat t}\rangle
=
-\frac{2}{n^2}\langle t^*\mu_0+(\hat t-t^*)\mu_1+\sum_{i=1}^{\hat t}\epsilon_i,\,(n-\hat t)\mu_1+\sum_{i=\hat t+1}^{n}\epsilon_i\rangle
\\=
&
-\frac{2}{n^2}\left[t^*(n-\hat t)\langle\mu_0,\mu_1\rangle+
(\hat t-t^*)(n-\hat t)\|\mu_1\|^2+
\langle t^*\mu_0+(\hat t-t^*)\mu_1,\,\sum_{i=\hat t+1}^{n}\epsilon_i\rangle \right]
\\&
-\frac{2}{n^2}\left[\langle \sum_{i=1}^{\hat t}\epsilon_i,\,(n-\hat t)\mu_1\rangle+
\langle \sum_{i=1}^{\hat t}\epsilon_i,\,\sum_{i=\hat t+1}^{n}\epsilon_i\rangle
\right]
\end{split}
\end{equation*}
and 
\begin{equation*}
\begin{split}
&\frac{2t^*(n-t^*)}{n^2}\langle\overline{\phi(y)}_{t^*},\,\overline{\phi(y)}_{n-t^*}\rangle
\\=
&
\frac{2}{n^2}\left[t^*(n-t^*)\langle\mu_0,\mu_1\rangle+
\langle t^*\mu_0,\,\sum_{i=t^*+1}^{n}\epsilon_i\rangle 
+\langle \sum_{i=1}^{t^*}\epsilon_i,\,(n-t^*)\mu_1\rangle+
\langle \sum_{i=1}^{t^*}\epsilon_i,\,\sum_{i=t^*+1}^{n}\epsilon_i\rangle
\right]
\end{split}
\end{equation*}
Thus, for any $x>0$,
\begin{equation}\label{appendix_eq:consistency_s1_term4}
\begin{split}
\eqref{appendix_eq:consistency_s1_line7}
=
&
\,\,\,\frac{2}{n^2}\left[t^*(\hat t-t^*)\langle\mu_0,\mu_1\rangle-
(\hat t-t^*)(n-\hat t)\|\mu_1\|^2+
\langle \sum_{i=1}^{t^*}\epsilon_i,\,(\hat t-t^*)\mu_1\rangle
+
\langle \sum_{i=t^*+1}^{\hat t}\epsilon_i,\,t^*\mu_0-(n-\hat t)\mu_1\rangle
 \right]
\\&
+\frac{2}{n^2}\left[\langle \sum_{i=\hat t+1}^{n}\epsilon_i,\,-(\hat t-t^*)\mu_1\rangle+
\langle \sum_{i=1}^{t^*}\epsilon_i,\,\sum_{i=t^*+1}^{n}\epsilon_i\rangle
-\langle \sum_{i=1}^{\hat t}\epsilon_i,\,\sum_{i=\hat t+1}^{n}\epsilon_i\rangle
\right]
\\\stackrel{(a)}{\leq}&
\,\,\,\frac{2}{n^2}\left[t^*(\hat t-t^*)\langle\mu_0,\mu_1\rangle-
(\hat t-t^*)(n-\hat t)\|\mu_1\|^2+
(\hat t-t^*)\left(\sqrt{2xt^*}+\frac{x}{3}\right)M^2
\right]
\\&
+\frac{2}{n^2}\left[
(t^*+n-\hat t)\left(\sqrt{2x(\hat t-t^*)}+\frac{x}{3}\right)M^2
+(\hat t-t^*)\left(\sqrt{2x(n-\hat t)}+\frac{x}{3}\right)M^2
\right]
\\&
+\frac{2}{n^2}\left[
\frac{14}{3}\sqrt{t^*(n-t^*)}M^2(\sqrt{x}+\sqrt{2})^2
+\frac{14}{3}\sqrt{t(n-\hat t)}M^2(\sqrt{x}+\sqrt{2})^2
\right]
\quad\quad w.p. \geq 1-5e^{-x}
\\\stackrel{(b)}{\leq}&\,\,\,
\frac{2}{n^2}t^*(\hat t-t^*)\langle\mu_0,\mu_1\rangle-
\frac{2}{n^2}(\hat t-t^*)(n-\hat t)\|\mu_1\|^2
\\&+6\sqrt{\frac{14}{3}}M^2(\sqrt{x}+\sqrt{2})\frac{1}{\sqrt{n}}\sqrt{\frac{\hat t-t^*}{n}}+\frac{56}{3}\frac{M^2(\sqrt{x}+\sqrt{2})^2}{n}
\end{split}
\end{equation}
where (a) follows from Proposition 6 in \cite{arlot2012kernel}. Inequality (b) follows from the fact that $\frac{\hat t-t^*}{n}\leq \sqrt{\frac{\hat t-t^*}{n}}$.

Thus, combining \eqref{appendix_eq:consistency_s1_term1}, \eqref{appendix_eq:consistency_s1_term2}, \eqref{appendix_eq:consistency_s1_term3} and \eqref{appendix_eq:consistency_s1_term4}, we have: with probability at least $1-14e^{-x}$,
\begin{equation*}
\begin{split}
0\leq &
-\frac{(\hat t-t^*)t^*}{n^2}\|\mu_0-\mu_1\|^2-\frac{1}{2}\frac{(\hat t-t^*)(n-\hat t)}{n^2}\|\mu_0-\mu_1\|^2+
\frac{1}{n}\left(\frac{17}{2}x+28(\sqrt{x}+\sqrt{2})^2\right)M^2
\\&+12\sqrt{\frac{14}{3}}M^2(\sqrt{x}+\sqrt{2})\frac{1}{\sqrt{n}}\sqrt{\frac{\hat t-t^*}{n}}
\\\leq &
-\frac{1}{2}\frac{\hat t-t^*}{n}\frac{n-\hat t+t^*}{n}\|\mu_0-\mu_1\|^2+
\frac{1}{n}\left(\frac{17}{2}x+28(\sqrt{x}+\sqrt{2})^2\right)M^2
\\&+12\sqrt{\frac{14}{3}}M^2(\sqrt{x}+\sqrt{2})\frac{1}{\sqrt{n}}\sqrt{\frac{\hat t-t^*}{n}}
\\\leq &
-\frac{1}{2}\frac{\hat t-t^*}{n}(1-\rho_1+\rho_0)\|\mu_0-\mu_1\|^2+
\frac{1}{n}\left(\frac{17}{2}x+28(\sqrt{x}+\sqrt{2})^2\right)M^2
\\&+12\sqrt{\frac{14}{3}}M^2(\sqrt{x}+\sqrt{2})\frac{1}{\sqrt{n}}\sqrt{\frac{\hat t-t^*}{n}}
\end{split}
\end{equation*}
The solution to the above inequality is that
\begin{equation*}
\begin{split}
\frac{\hat t-t^*}{n}
\leq
\frac{1}{n}\left(\frac{12\sqrt{14/3}M^2(\sqrt{x}+\sqrt{2})}{(1-\rho_1+\rho_0)\|\mu_0-\mu_1\|^2}\right)^2
+
\frac{2}{n}\frac{17x/2+28(\sqrt{x}+\sqrt{2})^2}{(1-\rho_1+\rho_0)\|\mu_0-\mu_1\|^2}M^2
\end{split}
\end{equation*}
When $\hat t<t^*$, we can prove a similar inequality. Thus for any $\hat t$, the desired conclusion holds.
\end{proof}

\subsection{Proof of Theorem \ref{thm:consistency_S2}}
\begin{proof}
First notice that $\forall t$, under assumption (3) $\mu_0=\mu_1$, we have
\begin{equation}
\bar d_{B_1(t)}-\bar d_{B_2(t)}
=\frac{2}{t-1}\left(-\|\Pi(\bm\epsilon)_{1}^t\|^2+\|(\bm\epsilon)_{1}^t\|^2\right)
-\frac{2}{n-t-1}\left(-\|\Pi(\bm\epsilon)_{t+1}^n\|^2+\|(\bm\epsilon)_{t+1}^n\|^2\right)
\end{equation}
Plug this into the basic inequality, $S_2(\hat t)\geq S_1(t^*)$, we have
\begin{equation*}\footnotesize
\begin{split}
&LHS:=
\\&\sqrt{\frac{(n-t^*)t^*}{n^2}}
\times\left|
-\frac{1}{t^*(t^*-1)}\left\|\sum_{i=1}^{t^*}\epsilon_i\right\|^2+\frac{1}{t^*-1}\sum_{i=1}^{t^*}\|\epsilon_i\|^2
+\frac{1}{(n-t^*)(n-t^*-1)}\left\|\sum_{i=t^*+1}^{n}\epsilon_i\right\|^2-\frac{1}{n-t^*-1}\sum_{i=t^*+1}^{n}\|\epsilon_i\|^2
\right|
\\&\leq 
\sqrt{\frac{(n-\hat t)\hat t}{n^2}}
\times\left|
-\frac{1}{\hat t(\hat t-1)}\left\|\sum_{i=1}^{\hat t}\epsilon_i\right\|^2+\frac{1}{\hat t-1}\sum_{i=1}^{\hat t}\|\epsilon_i\|^2
+\frac{1}{(n-\hat t)(n-\hat t-1)}\left\|\sum_{i=\hat t+1}^{n}\epsilon_i\right\|^2-\frac{1}{n-\hat t-1}\sum_{i=\hat t+1}^{n}\|\epsilon_i\|^2
\right|
:= RHS
\end{split}
\end{equation*}
Now suppose that $v_0>v_1$. When $\hat t\leq t^*$, we have, for any $x>0$, when $n$ is sufficiently large,
\begin{equation}\label{appendix_eq:S2_consistency_RHS}
\begin{split}
RHS
=&
\sqrt{\frac{(n-t)t}{n^2}}
\times
\left\vert
-\frac{1}{\hat t(\hat t-1)}\left\|\sum_{i=1}^{\hat t}\epsilon_i\right\|^2+\frac{1}{(n-\hat t)(n-\hat t-1)}\left\|\sum_{i=\hat t+1}^{n}\epsilon_i\right\|^2\right.
\\&\left.+\frac{1}{\hat t-1}\sum_{i=1}^{\hat t}(\|\epsilon_i\|^2-v_i)-\frac{1}{n-\hat t-1}\sum_{i=\hat t+1}^{n}(\|\epsilon_i\|^2-v_i)
+\frac{\hat t}{\hat t-1}(v_0-v_1)+\frac{v_1}{n-\hat t-1}
\right\vert
\\\stackrel{(a)}{=}&
\sqrt{\frac{(n-\hat t)\hat t}{n^2}}
\times\left(
-\frac{1}{\hat t(\hat t-1)}\left\|\sum_{i=1}^{\hat t}\epsilon_i\right\|^2+\frac{1}{(n-\hat t)(n-\hat t-1)}\left\|\sum_{i=\hat t+1}^{n}\epsilon_i\right\|^2
+\frac{1}{\hat t-1}\sum_{i=1}^{\hat t}(\|\epsilon_i\|^2-v_i)\right) 
\\&+\sqrt{\frac{(n-\hat t)\hat t}{n^2}}
\times\left(-\frac{1}{n-\hat t-1}\sum_{i=\hat t+1}^{n}(\|\epsilon_i\|^2-v_i)
+\frac{t^*}{\hat t-1}(v_0-v_1)+\frac{v_1}{n-\hat t-1}\right)
\quad w.p. \geq 1-e^{-x}
\\\stackrel{(b)}{\leq}&
\sqrt{\frac{(n-\hat t)\hat t}{n^2}}
\times\left(
\frac{1}{(\hat t-1)}\frac{14}{3}(\sqrt{x}+\sqrt{2})^2M^2+\frac{1}{(n-\hat t-1)}\frac{14}{3}(\sqrt{x}+\sqrt{2})^2M^2
+\frac{1}{\hat t-1}\sum_{i=1}^{t}(\|\epsilon_i\|^2-v_i)\right) 
\\&+\sqrt{\frac{(n-\hat t)\hat t}{n^2}}
\times\left(-\frac{1}{n-\hat t-1}\sum_{i=\hat t+1}^{n}(\|\epsilon_i\|^2-v_i)
+\frac{t^*}{\hat t-1}(v_0-v_1)+\frac{v_1}{n-\hat t-1}\right)
\quad w.p. \geq 1-5e^{-x}
\end{split}
\end{equation}
where (a) follows from the fact that $-\frac{1}{t^*(t^*-1)}\|\sum_{i=1}^{t^*}\epsilon_i\|^2+\frac{1}{(n-t^*)(n-t^*-1)}\|\sum_{i=t^*+1}^{n}\epsilon_i\|^2
+\frac{1}{t^*-1}\sum_{i=1}^{t^*}(\|\epsilon_i\|^2-v_i)-\frac{1}{n-t^*-1}\sum_{i=t^*+1}^{n}(\|\epsilon_i\|^2-v_i)=\O_p(\frac{1}{\sqrt{n}})$ and assumption (4). Inequality (b) follows from Lemma \ref{appendix_lemma:bound_noise_kernel_bd} and union bound.

Similarly, we have
\begin{equation}\small\label{appendix_eq:S2_consistency_LHS}
\begin{split}
LHS
=&
\sqrt{\frac{(n-t^*)t^*}{n^2}}
\times
\left|
-\frac{1}{t^*(t^*-1)}\left\|\sum_{i=1}^{t^*}\epsilon_i\right\|^2+\frac{1}{(n-t^*)(n-t^*-1)}\left\|\sum_{i=t^*+1}^{n}\epsilon_i\right\|^2\right.
\\&\left.+\frac{1}{t^*-1}\sum_{i=1}^{t^*}(\|\epsilon_i\|^2-v_i)-\frac{1}{n-t^*-1}\sum_{i=t^*+1}^{n}(\|\epsilon_i\|^2-v_i)
+\frac{t^*}{t^*-1}v_0-\frac{(n-t^*)}{n-t^*-1}v_1
\right|
\\\geq&
\sqrt{\frac{(n-t^*)t^*}{n^2}}
\times\left(
-\frac{1}{(t^*-1)}\frac{14}{3}(\sqrt{x}+\sqrt{2})^2M^2-\frac{1}{(n-t^*-1)}\frac{14}{3}(\sqrt{x}+\sqrt{2})^2M^2
-\frac{1}{t^*-1}\sum_{i=1}^{t^*}(\|\epsilon_i\|^2-v_i)\right) 
\\&+\sqrt{\frac{(n-t^*)t^*}{n^2}}
\times\left(-\frac{1}{n-t^*-1}\sum_{i=t^*+1}^{n}(\|\epsilon_i\|^2-v_i)
+(v_0-v_1)-\frac{v_1}{n-t^*-1}-\frac{v_0}{t^*-1}\right)
\quad w.p. \geq 1-5e^{-x}
\end{split}
\end{equation}
Combing Equation \eqref{appendix_eq:S2_consistency_RHS} and Equation \eqref{appendix_eq:S2_consistency_LHS}, we have: 
\begin{equation*}
\begin{split}
&\left(\sqrt{\frac{(n-t^*)t^*}{n^2}}-\sqrt{\frac{(n-\hat t)\hat t}{n^2}}\frac{t^*}{\hat t}\right)|v_0-v_1|
\\\leq&\,\,\,
\sqrt{\frac{(n-\hat t)\hat t}{n^2}}\left[\frac{1}{n-\hat t-1}v_1+\frac{14}{3}\left(\frac{1}{n_0}+\frac{1}{n-n_1}\right)(\sqrt{x}+\sqrt{2})^2M^2\right]
\\&+
\sqrt{\frac{(n-t^*)t^*}{n^2}}\left[\frac{1}{n-t^*-1}v_1+\frac{1}{t^*-1}v_0+\frac{14}{3}\left(\frac{1}{n_0}+\frac{1}{n-n_1}\right)(\sqrt{x}+\sqrt{2})^2M^2\right]
\\&
+\sqrt{\frac{(n-\hat t)\hat t}{n^2}}\left[\frac{t^*}{\hat t-1}-\frac{t^*}{\hat t}\right]|v_0-v_1|
\\&
+\left(\sqrt{\frac{(n-\hat t)\hat t}{n^2}}\frac{1}{\hat t-1}-\sqrt{\frac{(n-t^*)t^*}{n^2}}\frac{1}{t^*-1}\right)\sum_{i=1}^{t^*}\left(\|\epsilon_i\|^2-v_i\right)
\\&
+\left(\sqrt{\frac{(n-\hat t)\hat t}{n^2}}\frac{1}{\hat t-1}+\sqrt{\frac{(n-t^*)t^*}{n^2}}\frac{1}{n-t^*-1}\right)\sum_{i=t^*+1}^{\hat t}\left(\|\epsilon_i\|^2-v_i\right)
\\&
-\left(\sqrt{\frac{(n-\hat t)\hat t}{n^2}}\frac{1}{n-\hat t-1}-\sqrt{\frac{(n-t^*)t^*}{n^2}}\frac{1}{n-t^*-1}\right)\sum_{i=\hat t+1}^{n}\left(\|\epsilon_i\|^2-v_i\right)
\quad w.p. \geq 1-10e^{-x}
\\\stackrel{(a)}{\leq}&\,\,\,
\frac{1}{2}\left[\frac{2}{n-n_1}M^2+\frac{2}{n_0}M^2+\frac{28}{3}\left(\frac{1}{n_0}+\frac{1}{n-n_1}\right)(\sqrt{x}+\sqrt{2})^2M^2\right]
\\&
+\frac{2}{n}\left(\sqrt{\frac{n-\hat t}{\hat t}}-\sqrt{\frac{n-t^*}{t^*}}\right)\left(\sqrt{2t^*\gamma^2x}+\frac{M^2}{3}x\right)
\\&
+\frac{2}{n}\left(\sqrt{\frac{n-\hat t}{\hat t}}+\sqrt{\frac{t^*}{n-t^*}}\right)\left(\sqrt{2(\hat t-t^*)\gamma^2x}+\frac{M^2}{3}x\right)
\\&
+\frac{2}{n}\left(\sqrt{\frac{\hat t}{n-\hat t}}-\sqrt{\frac{t^*}{n-t^*}}\right)\left(\sqrt{2(n-\hat t)\gamma^2x}+\frac{M^2}{3}x\right)
\quad\quad w.p. \geq 1-16e^{-x}
\\\leq&\,\,\,
\frac{1}{2}\left[\frac{2}{n-n_1}M^2+\frac{2}{n_0}M^2+\frac{28}{3}\left(\frac{1}{n_0}+\frac{1}{n-n_1}\right)(\sqrt{x}+\sqrt{2})^2M^2\right]
\\&
+\frac{2}{n}\left(2\sqrt{\frac{n-\hat t}{\hat t}}-\sqrt{\frac{n-t^*}{t^*}}+\sqrt{\frac{\hat t}{n-\hat t}}\right)\frac{M^2}{3}x
\\&
+\frac{2}{n}\left(\sqrt{\frac{n-t^*}{n-\hat t}}-\sqrt{\frac{t^*}{\hat t}}\right)\sqrt{n-\hat t}\left(\sqrt{\frac{\hat t}{n-t^*}}-1\right)\sqrt{2\gamma^2x}
\\&
+\frac{2}{n}\left(\sqrt{\frac{n-\hat t}{\hat t}}+\sqrt{\frac{t^*}{n-t^*}}\right)\sqrt{\frac{\hat t-t^*}{n}}\sqrt{2\gamma^2x}
\quad\quad w.p. \geq 1-16e^{-x}
\end{split}
\end{equation*}
where (a) follows from Proposition in \cite{arlot2012kernel} and the fact that $0<v_0,v_1\leq M^2$.

Let $c_1=\frac{1}{2}\left[\frac{2}{n-n_1}M^2+\frac{2}{n_0}M^2+\frac{28}{3}\left(\frac{1}{n_0}+\frac{1}{n-n_1}\right)(\sqrt{x}+\sqrt{2})^2M^2\right] + \frac{2}{n}\left(2\sqrt{\frac{n-\hat t}{\hat t}}-\sqrt{\frac{n-t^*}{t^*}}+\sqrt{\frac{\hat t}{n-\hat t}}\right)\frac{M^2}{3}x$, 
$c_2=\sqrt{\frac{t^*(n-\hat t)}{n^2}}(v_0-v_1)-\frac{2}{n}\sqrt{n-\hat t}\left(\sqrt{\frac{\hat t}{n-t^*}}-1\right)\sqrt{2\gamma^2 x}$, $w=\frac{t-t^*}{n}$, $c_3=\frac{2}{\sqrt{n}}\left(\sqrt{\frac{n-\hat t}{\hat t}}+\sqrt{\frac{t^*}{n-t^*}}\right)\sqrt{2\gamma^2x}$.
Then, we have: with probability at least $1-16e^{-x}$,
\begin{equation*}
c_2\left(\sqrt{1+\frac{w}{1-w-\rho^*}}-\sqrt{1-\frac{w}{w+\rho^*}}\right)
\leq c_1+c_3\sqrt{w}
\end{equation*}
Using the fact that 
\begin{equation*}
\begin{split}
c_2\left(\sqrt{1+\frac{w}{1-w-\rho^*}}-\sqrt{1-\frac{w}{w+\rho^*}}\right)
\geq 
c_2\left(\sqrt{1+w}-1\right)
\stackrel{(a)}{\geq}
c_2(\sqrt{2}-1)w
\end{split}
\end{equation*}
where (a) follows from the fact that $\sqrt{1+w}\geq 1+(\sqrt{2}-1)w$ for any $w>0$. We have
\begin{equation}\label{appendix_eq:s2_consistency_inequality}
c_2(\sqrt{2}-1)w\leq c_1+c_3\sqrt{w}
\end{equation}
Notice that
\begin{equation*}
\begin{split}
c_1\leq
&\frac{1}{n}\left[\frac{1}{1-\rho_1}M^2+\frac{1}{\rho_0}M^2+\frac{14}{3}\left(\frac{1}{\rho_0}+\frac{1}{1-\rho_1}\right)(\sqrt{x}+\sqrt{2})^2M^2\right] 
+ \frac{2}{n}\left(2\sqrt{\frac{1-\rho_0}{\rho_0}}+\sqrt{\frac{\rho_1}{1-\rho_1}}\right)\frac{M^2}{3}x
\end{split}
\end{equation*}
\begin{equation*}
\begin{split}
c_2\geq
\sqrt{\rho^*(1-\rho_1)}(v_0-v_1)-\frac{2}{\sqrt{n}}\sqrt{1-\rho_0}\left(\sqrt{\frac{\rho_1}{1-\rho^*}}-1\right)\sqrt{2\gamma^2 x}
\end{split}
\end{equation*}
\begin{equation*}
\begin{split}
c_3
\leq
\frac{2}{\sqrt{n}}\left(\sqrt{\frac{1-\rho_0}{\rho_0}}+\sqrt{\frac{\rho^*}{1-\rho^*}}\right)\sqrt{2\gamma^2x}
\end{split}
\end{equation*}
Thus, $c_2>c_3\sqrt{\rho^*(1-\rho_1)}(v_0-v_1)$ when $n$ is sufficiently large.

So Equation \eqref{appendix_eq:s2_consistency_inequality} yields (when $n$ is sufficiently large)
\begin{equation*}
\begin{split}
w
\leq
&\left(\frac{c_3}{2c_2(\sqrt{2}-1)}+\sqrt{\frac{c_1}{c_2(\sqrt{2}-1)}+\frac{c_3^2}{4c_2^2(\sqrt{2}-1)^2}}\right)^2
\stackrel{(b)}{\leq}
\frac{c_3^2}{c_2^2(\sqrt{2}-1)^2}+\frac{2c_1}{c_2(\sqrt{2}-1)}
\\\leq&
\widetilde C_0 \frac{1}{n}\left(\frac{\gamma^2x}{(v_0-v_1)^2}+\frac{M^2x}{|v_0-v_1|}\right)
\end{split}
\end{equation*}
where (b) utilizes the fact that $(x+y)^2\leq 2(x^2+y^2)$. 
\end{proof}

\subsection{Proof of Theorem \ref{thm:S1_power}}\label{appendix_sec:proof_S1_power}
Theorem \ref{thm:S1_power} is a  direct consequence of the following Theorem.
\begin{theorem}[Alternative distribution for $S_1$]\label{thm:S1_alternative}
	Under the alternative, if (1) $d$ is a semi-metric of negative type,  (2) $d(y_i,y_j)\leq M^2<\infty,\,\forall i,j$, (3) suppose there exists $\bm \Delta^{(1)}\in\mathcal{H}$ s.t. $\|\sqrt{n}(\mu_0-\mu_1)-\bm\Delta^{(1)}\|\rightarrow0$, then
	$$
	S_1
	\xrightarrow{w}
	\max_{\rho\in[\rho_0,\rho_1]}\left(\frac{\sum_l\left(\sqrt{\lambda_l}W^0(\rho)+\xi(\rho)\Delta_l^{(1)}\right)^2-\delta(\rho)}{\rho(1-\rho)}\right)
	$$
	where
	\begin{equation*}
	\delta(\rho)=
	\begin{cases}
	(1-\rho)\left((1-\rho)\rho^* v_0+(\rho-\rho^*+\rho\rho^*)v_1\right),\quad \text{if}\quad \rho^*\leq\rho
	\\
	\rho\left(\rho(1-\rho^*)v_1+(\rho\rho^*-2\rho+1)v_0\right),\quad \text{if}\quad \rho^*>\rho
	\end{cases}
	\end{equation*}
	and 
	$$\xi(\rho)=\begin{cases}
	\rho(1-\rho^*),\quad \text{if}\quad \rho\leq\rho^*\\
	(1-\rho)\rho^*,\quad\text{if}\quad \rho>\rho^*
	\end{cases}
	$$
\end{theorem}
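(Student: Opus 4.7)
The strategy is to revisit the decomposition established in the proof of Lemma~\ref{lemma:S1_null_unscaled} and track separately the deterministic drift introduced by the mean shift and the stochastic bridge coming from the centered feature maps. Writing $\phi(y_i)=\mu^i+\epsilon_i$ with $\mu^i=\mu_0$ for $i\le\tau^*$ and $\mu^i=\mu_1$ otherwise, the same algebraic identity used in the null proof yields
\[
n\rho^2(1-\rho)^2\,T_1(\lceil n\rho\rceil)=\tfrac{1}{n}\bigl\|M_n(t)\bigr\|_{\mathcal{H}}^2 - R_n(t),
\]
where $M_n(t)=\tfrac{n-t}{n}\sum_{i=1}^t\phi(y_i)-\tfrac{t}{n}\sum_{i=t+1}^n\phi(y_i)$ and
$R_n(t)=\tfrac{(n-t)^2 t}{n^3(t-1)}\sum_{i=1}^t\|\hat\epsilon_i\|^2+\tfrac{t^2(n-t)}{n^3(n-t-1)}\sum_{i=t+1}^n\|\hat\epsilon_i\|^2$. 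I split $M_n(t)=M_n^{\mu}(t)+M_n^{\epsilon}(t)$ along this additive decomposition and study each piece as a process in $\rho\in[\rho_0,\rho_1]$.

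For the signal piece, a direct computation of $\tfrac{n-t}{n}\sum_{i\le t}\mu^i-\tfrac{t}{n}\sum_{i>t}\mu^i$ in the regimes $t\le\tau^*$ and $t>\tau^*$ collapses in both cases to $M_n^{\mu}(t)=n\,\xi(\rho)(\mu_0-\mu_1)+o(\sqrt n)$, with $\xi(\rho)=\rho(1-\rho^*)$ when $\rho\le\rho^*$ and $\xi(\rho)=(1-\rho)\rho^*$ otherwise. Dividing by $\sqrt n$ and invoking the contiguous hypothesis $\sqrt n(\mu_0-\mu_1)\to\bm\Delta^{(1)}$ in $\mathcal{H}$ then gives $n^{-1/2}M_n^{\mu}(\lceil n\rho\rceil)\to\xi(\rho)\bm\Delta^{(1)}$ uniformly in $\rho$. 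For the noise piece, Assumption~2 supplies $\sup_i\E\|\phi(y_i)\|_{\mathcal{H}}^4\le M^4$, so the Hilbert-space functional CLT of \citet{tewes2017change} applied to the independent (and, after accounting for the $O(n^{-1/2})$ mean shift, asymptotically identically distributed) sequence $\{\epsilon_i\}$ delivers $n^{-1/2}M_n^{\epsilon}(\lceil n\rho\rceil)\xrightarrow{w}\bm W(\rho)-\rho\bm W(1)$ in $D([\rho_0,\rho_1],\mathcal{H})$, where the contiguity keeps the effective covariance asymptotically equal to $\Sigma_0$, so the $l$-th coordinate reduces to $\sqrt{\lambda_l}W_l^0(\rho)$ with mutually independent Brownian bridges. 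Applying the continuous map $\|\cdot\|_{\mathcal{H}}^2$ jointly to the signal plus noise yields
\[
\tfrac{1}{n}\|M_n(\lceil n\rho\rceil)\|_{\mathcal{H}}^2\xrightarrow{w}\sum_l\bigl(\sqrt{\lambda_l}W_l^0(\rho)+\xi(\rho)\Delta_l^{(1)}\bigr)^2.
\]

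The remaining term $R_n(t)$ is handled by law-of-large-numbers within each segment. Expanding $\|\hat\epsilon_i\|^2=\|\mu^i-\bar\mu\|^2+2\langle\mu^i-\bar\mu,\epsilon_i-\bar\epsilon\rangle+\|\epsilon_i-\bar\epsilon\|^2$, the cross terms vanish in probability by Lemma~\ref{lemma:deviation_exp_bounded}, the deterministic mean-shift averages scale as $\O(\|\mu_0-\mu_1\|^2)=\O(1/n)$ and are negligible, and the noise averages converge to the appropriate weighted mixtures of $v_0$ and $v_1$ determined by which side of $\tau^*$ each index lies on (pure $v_0$ on the pre-$\tau^*$ side when $\rho<\rho^*$, the mixture $\tfrac{\rho^*-\rho}{1-\rho}v_0+\tfrac{1-\rho^*}{1-\rho}v_1$ on the complementary side, and symmetrically for $\rho>\rho^*$). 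Plugging the segment limits into the two coefficients $\tfrac{(n-t)^2 t}{n^3(t-1)}\to\rho(1-\rho)^2$ and $\tfrac{t^2(n-t)}{n^3(n-t-1)}\to\rho^2(1-\rho)$ and simplifying reproduces exactly the two piecewise expressions for $\delta(\rho)$ stated in the theorem. Combining the limits, dividing through by $\rho(1-\rho)$ (bounded below on $[\rho_0,\rho_1]$), and applying the continuous mapping theorem with the $\sup_\rho$ functional completes the argument. The main obstacle is securing joint tightness in $D([\rho_0,\rho_1],\mathcal{H})$ of both the CUSUM process and the empirical-variance process so that the $\sup$ functional acts continuously; tightness of the noise bridge follows from the fourth-moment control given by Assumption~2 along the lines of the null proof, while tightness of $R_n$ follows from the a.s.\ boundedness $\|\epsilon_i\|_{\mathcal{H}}^2\le 4M^2$ and standard uniform partial-sum bounds for bounded increments.
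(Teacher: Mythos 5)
Your proposal follows essentially the same route as the paper's proof: the identical decomposition of $n\rho^2(1-\rho)^2T_1(\lceil n\rho\rceil)$ into the squared Hilbert-norm CUSUM term minus the two within-segment variance terms, the Hilbert-space functional CLT of \citet{tewes2017change} for the CUSUM part, and a law-of-large-numbers treatment of the variance terms producing $\delta(\rho)$, followed by the continuous-mapping step for the maximum over $\rho$. The only cosmetic difference is that you derive the drift $\xi(\rho)\bm\Delta^{(1)}$ by splitting signal and noise by hand, whereas the paper cites Corollary 5.2.2 of \citet{tewes2017change} for the noncentral limit directly; the content is the same.
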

\begin{proof}
First we have
\begin{equation*}
\begin{split}
&\frac{1}{n}\|\frac{n-t}{n}\sum_{i=1}^{t}\phi(y_i)-\frac{t}{n}\sum_{i=t+1}^{n}\phi(y_i)\|^2
=\frac{(n-t)^2t^2}{n^3}\left[\frac{1}{t(n-t)} d_A-\frac{1}{2t^2}d_{B_1}-\frac{1}{2(n-t)^2}d_{B_2}\right]
\\=&
n\rho^2(1-\rho)^2\left[\bar d_A(\lceil n\rho\rceil)-\frac{1}{2}\bar d_{B_1}(\lceil n\rho\rceil)-\frac{1}{2}\bar d_{B_2}(\lceil n\rho\rceil)\right]
+\frac{(n-t)^2t}{n^3(t-1)}\sum_{i=1}^{t}\|\hat \epsilon_i\|^2+\frac{t^2(n-t)}{n^3(n-t-1)}\sum_{i=t+1}^{n}\|\hat \epsilon_i\|^2
\end{split}
\end{equation*}
where $\hat\epsilon_i=\phi(y_i)-\frac{1}{t}\sum_{i=1}^{t}\phi(y_i)$ for $i=1,2,\cdots,t$ and $\hat\epsilon_i=\phi(y_i)-\frac{1}{n-t}\sum_{i=t+1}^{n}\phi(y_i)$ for $i=t+1,t+2,\cdots,n$. WLOG, assume $t^*\leq t$. For $i=1,2,\cdots,t$, write $\widetilde\epsilon_i=\mu_i-\bar \mu_t+\epsilon_i$ where $\bar \mu_t=\rho^*/\rho\mu_0+(1-\rho^*/\rho)\mu_1$. Write $\bar\epsilon_t=\frac{1}{t}\sum_{i=1}^{t}\epsilon_i$. Notice that 
\begin{equation*}
\begin{split}
&\frac{1}{t}\sum_{i=1}^{t}(\|\hat \epsilon_i\|^2 -\E\|\widetilde\epsilon\|^2)
= \frac{1}{t}\sum_{i=1}^{t}\left(\|\hat \epsilon_i\|^2 -\|\widetilde\epsilon_i\|^2\right) + \frac{1}{t}\sum_{i=1}^{t}(\|\widetilde \epsilon_i\|^2-\E\|\widetilde \epsilon_i\|^2)
\\&
= \frac{1}{t}\sum_{i=1}^{t}\left(\|\bar \epsilon_t\|^2 -2\langle\mu_i-\bar\mu_t+\epsilon_i,\bar\epsilon_t\rangle\right) 
\\&+ \frac{1}{t}\sum_{i=1}^{t}\left[\|\epsilon_i\|^2+\|\mu_i-\bar\mu_t\|^2+2\langle\mu_i-\bar\mu_t,\epsilon_i\rangle-\E\left(\|\epsilon_i\|^2+\|\mu_i-\bar\mu_t\|^2+2\langle\mu_i-\bar\mu_t,\epsilon_i\rangle\right)\right]
\\&
= -\|\bar \epsilon_t\|^2
+
\frac{1}{t}\sum_{i=1}^{t}\|\epsilon_i\|^2-\E\|\epsilon_i\|^2 +\frac{1}{t}\sum_{i=1}^{t}\left[2\langle\mu_i-\bar\mu_t,\epsilon_i\rangle\right]
\xrightarrow{P}0
\end{split}
\end{equation*}
where the last convergence follows from the fact that $\|\bar\epsilon_t\|=O_p(\frac{1}{\sqrt{t}})$. 
Similarly, we have
\begin{equation*}
\begin{split}
&\frac{1}{n-t}\sum_{i=t+1}^{n}(\|\hat \epsilon_i\|^2 -\E\|\widetilde\epsilon\|^2)
\xrightarrow{P}0
\end{split}
\end{equation*}
Notice that for $i=1,2,\cdots,t$, $\E_{F_0}\|\widetilde\epsilon\|^2=\E_{F_0}\left(\|\epsilon_i\|^2+\|\mu_i-\bar\mu_t\|^2+2\langle\mu_i-\bar\mu_t,\epsilon_i\rangle\right)=v_0+\frac{(1-\rho^*)^2}{\rho^2}\|\mu_0-\mu_1\|^2$, and
$\E_{F_1}\|\widetilde\epsilon\|^2=\E_{F_1}\left(\|\epsilon_i\|^2+\|\mu_i-\bar\mu_{n-t}\|^2+2\langle\mu_i-\bar\mu,\epsilon_i\rangle\right)=v_1+\frac{(\rho^*)^2}{\rho^2}\|\mu_0-\mu_1\|^2$. Thus, 
\begin{equation}\label{appendix_eq:proof_S1_power_deltarho}
\begin{split}
&\frac{(n-t)^2t}{n^3(t-1)}\sum_{i=1}^{t}\|\hat \epsilon_i\|^2+\frac{t^2(n-t)}{n^3(n-t-1)}\sum_{i=t+1}^{n}\|\hat \epsilon_i\|^2
\\&=
\bm 1(\rho^*\leq \rho)(1-\rho)\left(\rho^*(1-\rho)\frac{\rho-\rho^*}{\rho}\|\mu_0-\mu_1\|^2+(1-\rho)\rho^* v_0+(\rho-\rho^*+\rho\rho^*)v_1\right)
\\&+
\bm 1(\rho^*> \rho)\rho\left(\rho(\rho^*-\rho)\frac{1-\rho^*}{1-\rho}\|\mu_0-\mu_1\|^2+\rho(1-\rho^*)v_1+(\rho\rho^*-2\rho+1)v_0\right)
\doteq\delta(\rho)
\end{split}
\end{equation}

Under the alternative, from corollary 5.2.2 of \cite{tewes2017change}, if $\|\sqrt{n}(\mu_0-\mu_1)-\Delta\|\rightarrow0$, then
we have
$$
\frac{1}{n}\|\frac{n-t}{n}\sum_{i=1}^{t}\phi(y_i)-\frac{t}{n}\sum_{i=t+1}^{n}\phi(y_i)\|^2\xrightarrow{w}\|W^0(\rho)+\xi(\rho)\Delta\|^2
$$
where 
\begin{equation}\label{appendix_eq:proof_S1_power_xirho}
    \xi(\rho)=\rho(1-\rho^*)\, \text{if}\, \rho\leq\rho^* \quad\text{and}\quad \xi(\rho)=(1-\rho)\rho^*\, \text{if}\, \rho>\rho^*
\end{equation}

Thus, under the alternative,
$$
S_1
\xrightarrow{w}
\max_{\rho\in[\rho_0,\rho_1]}\left(\frac{\|W^0(\rho)+\xi(\rho)\Delta\|^2-\delta(\rho)}{\rho(1-\rho)}\right)
$$
Notice that 
$\|W^0(\rho)+\xi(\rho)\Delta\|^2=O_p(n\|\mu_0-\mu_1\|^2)$, $\delta(\rho)=o(n\|\mu_0-\mu_1\|^2)$. Thus, when
$\sqrt{n}\|\mu_0-\mu_1\|\rightarrow\infty$,
the desired conclusion holds.
\end{proof}

\subsection{Proof of Theorem \ref{thm:S2_power}}
Theorem \ref{thm:S2_power} is a direct consequence of the following theorem.
\begin{theorem}[Alternative distribution for $S_2$]\label{thm:S2_alternative}
	Under the alternative, if (1) $d$ is a semi-metric of negative type, (2) $d$ is a semi-metric of negative type, (3) $\sqrt{n}(v_0-v_1)\rightarrow \Delta_v^{(2)}$ and $\sqrt{n}\|\mu_0-\mu_1\|^2\rightarrow\Delta_{\mu}^{(2)}$, then 
	$$S_2\xrightarrow{w}\max_{\rho\in[\rho_0,\rho_1]}\left(\frac{|G+\Delta^{(2)}|}{\sqrt{\rho(1-\rho)}}\right)$$
	where $G$ is some Gaussian process and 
	$$
	\Delta^{(2)}=
	\begin{cases}
	\frac{1}{s_n}\rho^*(1-\rho)\left(\Delta_v^{(2)}+\frac{\rho-\rho^*}{\rho}\Delta_{\mu}^{(2)}\right),\rho\geq\rho^*\\
	\frac{1}{s_n}(1-\rho^*)\rho\left(\Delta_v^{(2)}-\frac{\rho^*-\rho}{1-\rho}\Delta_{\mu}^{(2)}\right), \rho<\rho^*
	\end{cases}
	$$
	where $s_n^2=\rho^*Var_{F_0}\left(\|(1-\rho^*)(\mu_0-\mu_1)+\epsilon\|^2\right) +(1-\rho^*)Var_{F_0}\left(\|\rho^*(\mu_1-\mu_0)+\epsilon\|^2\right)$. 
\end{theorem}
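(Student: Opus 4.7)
The plan is to mirror the null-case argument of Lemma \ref{lemma:S2_null_unscaled}, separating the deterministic drift induced by the local alternative from a Gaussian noise part whose limit coincides with the null limit. By the CUSUM representation \eqref{eq:T2_CUSUM}, $T_2(t)=2|V(\phi(y),\|\cdot\|_{\mathcal{H}})_{t-}-V(\phi(y),\|\cdot\|_{\mathcal{H}})_{t+}|$; write $V_{t\pm}$ for these within-phase dispersions. Since Lemma \ref{prop:proof_of_S2_null} applies under both the null and the alternative, we still have $\widehat s_n\xrightarrow{p}s$, so it suffices to analyze the process $\frac{1}{s}\sqrt{t(n-t)/n}\,(V_{t-}-V_{t+})$ indexed by $\rho=t/n\in[\rho_0,\rho_1]$.

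First I would compute the mean drift, distinguishing $\rho<\rho^*$ and $\rho\ge\rho^*$. For $\rho<\rho^*$, phase I contains only $F_0$ samples so $\E V_{t-}=v_0$; phase II mixes $\tau^*-t$ samples from $F_0$ with $n-\tau^*$ from $F_1$, and the standard identity $\E\sum_{i=t+1}^{n}\|z_i-\bar z_{t+}\|_{\mathcal H}^2=\tfrac{n-t-1}{n-t}\sum_{i>t}\sigma_i^2+\sum_{i>t}\|\mu^i-\bar\mu_{t+}\|_{\mathcal H}^2$ gives
\begin{equation*}
\E V_{t+}=\tfrac{(\tau^*-t)v_0+(n-\tau^*)v_1}{n-t}+\tfrac{(\tau^*-t)(n-\tau^*)}{(n-t)(n-t-1)}\|\mu_0-\mu_1\|_{\mathcal H}^2.
\end{equation*}
Subtracting, multiplying by $\sqrt{n\rho(1-\rho)}$, and replacing $\sqrt n(v_0-v_1)$ and $\sqrt n\|\mu_0-\mu_1\|_{\mathcal H}^2$ by $\Delta_v^{(2)}$ and $\Delta_\mu^{(2)}$ via assumptions (3)--(4) produces precisely $\Delta^{(2)}(\rho)/s$ for $\rho<\rho^*$; the case $\rho\ge\rho^*$ follows by symmetry after swapping the roles of the two phases.

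Second, I would show that the centred part $V_{t-}-V_{t+}-\E[V_{t-}-V_{t+}]$, scaled by $\sqrt{t(n-t)/n}$, converges weakly to the Gaussian process $G$ that appears in the null limit (a Brownian bridge up to scaling). Writing $z_i=\mu^i+\epsilon_i$ and expanding $\|z_i-\bar z_{t\pm}\|_{\mathcal H}^2$, the stochastic part splits into (i) a univariate-type centred sum in $\|\epsilon_i\|_{\mathcal H}^2-v_i$ plus $\|\bar\epsilon_{t\pm}\|_{\mathcal H}^2$ remainders, and (ii) a cross term $\sum_{i>t}\langle \mu^i-\bar\mu_{t+},\epsilon_i\rangle/(n-t-1)$. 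The remainders in (i) are $O_p(n^{-1})$ by Lemma \ref{lemma:deviation_exp_bounded} and vanish after the $O(\sqrt n)$ scaling; for the cross term in (ii), $\sum_{i>t}\|\mu^i-\bar\mu_{t+}\|_{\mathcal H}^2$ is $O(n\|\mu_0-\mu_1\|_{\mathcal H}^2)=O(n^{1/2})$ under assumption (4), hence its standard deviation is $O(n^{-3/4})$ and also $o_p(1)$ after the $\sqrt n$ scaling. The leading piece then reduces to the $U_1$-type sum analysed in Lemma \ref{lemma:S2_null_unscaled}, and Lemma 3.1 of \citet{doukhan2012mixing} applied to the asymptotically homogeneous sequence $\|\epsilon_i\|_{\mathcal H}^2-v_i$ (the two regimes differ by $O(n^{-1/2})$ in mean) delivers convergence to $G(\rho)$, uniformly in $\rho$.

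Combining the two parts and invoking the continuous mapping theorem for $\rho\mapsto|G(\rho)+\Delta^{(2)}(\rho)|/\sqrt{\rho(1-\rho)}$ on the compact interval $[\rho_0,\rho_1]$, followed by the supremum functional, yields the stated weak limit. The main obstacle is controlling the two-regime cross term and the $\widehat s_n-s$ residuals uniformly in $\rho\in[\rho_0,\rho_1]$ without destroying the exact $\sqrt n$ scaling that keeps the drift non-degenerate; this is precisely where the local-alternative rates in assumptions (3) and (4) are essential, since they ensure the cross-term and higher-order contamination are $o_p(1)$ while the drift terms survive at the correct order.
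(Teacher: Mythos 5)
Your proposal is correct and follows essentially the same route as the paper's proof: split the scaled difference of within-phase dispersions into a deterministic drift that converges to $\Delta^{(2)}$ under the $\sqrt{n}$-local rates (3)--(4), a centred sum handled by the Doukhan-type functional CLT giving the Gaussian process $G$, and $\|\bar\epsilon_{t\pm}\|_{\mathcal{H}}^2$ remainders that vanish after scaling, with $\widehat s_n\xrightarrow{p}s$ supplied by Lemma \ref{prop:proof_of_S2_null}. The only differences are bookkeeping: the paper centres at the phase-wise mean vectors (its $U_1,U_2,U_3$ decomposition), which absorbs the $\langle\mu^i-\bar\mu_{t\pm},\epsilon_i\rangle$ cross term that you instead isolate and bound explicitly, and the paper leaves $G$ as an unspecified Gaussian process rather than identifying it with the null Brownian-bridge limit (an identification that would need the second-moment structure of $\|\epsilon\|_{\mathcal{H}}^2$ to be asymptotically the same in the two phases, but which is immaterial for the theorem as stated).
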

\begin{proof}
First, from Lemma \ref{prop:proof_of_S2_null}, we know that
$$
\hsn^2\xrightarrow{P}s_n^2=\rho^*Var_{F_0}(\|\widetilde\epsilon_i\|^2)+(1-\rho^*)Var_{F_0}(\|\widetilde\epsilon_i\|^2)
$$
where $\widetilde\epsilon_i=\mu_i+\epsilon_i-\frac{1}{n}\sum_{i=1}^{n}\mu_i$. Thus $s_n^2$ is some strictly positive constant.

Define $\epsilon_i^*=\mu_i+\epsilon_i-\frac{1}{t}\sum_{i=1}^t\mu_i$ for all $i=1,2,\cdots, t$ and $\epsilon_i^*=\mu_i+\epsilon_i-\frac{1}{n-t}\sum_{i=t+1}^n\mu_i$ for all $i=1,2,\cdots, t$. Notice that
\begin{equation*}\small
\begin{split}
&\frac{1}{\sqrt{n}\hsn}\frac{t(n-t)}{2n}\left(\db-\dbb\right)
=\frac{1}{\sqrt{n}}\frac{s_n}{\hsn}\frac{1}{s_n}\left[ \frac{t}{t-1}\frac{n-t}{n}\sum_{i=1}^t\|\epsilon_i^*\|^2-\frac{n-t}{n-t-1}\frac{t}{n}\sum_{i=t+1}^n\|\epsilon_i^*\|^2\right]
\\&+\frac{1}{\hsn}\frac{1}{\sqrt{n}}\frac{t}{t-1}\frac{n-t}{n}\sum_{i=1}^t\left(\|z_i-\bar z_t\|^2-\|\epsilon_i^*\|^2\right)-\frac{1}{\hsn}\frac{1}{\sqrt{n}}\frac{n-t}{n-t-1}\frac{t}{n}\sum_{i=t+1}^n\left(\|z_i-\bar z_{n-t}\|^2-\|\epsilon_i^*\|^2\right)
\\&=U_1+U_2+U_3
\end{split}
\end{equation*}	
For $U_1$,
we have
\begin{equation}\label{appendix_eq:s2_power_u1}
\begin{split}
U_1=&
\frac{1}{\sqrt{n}}\frac{s_n}{\hsn}\frac{1}{s_n}\left[ \frac{t}{t-1}\frac{n-t}{n}\sum_{i=1}^t\left(\|\epsilon_i^*\|^2-\E\|\epsilon_i^*\|^2\right)-\frac{n-t}{n-t-1}\frac{t}{n}\sum_{i=t+1}^n\left(\|\epsilon_i^*\|^2-\E\|\epsilon_i^*\|^2\right)\right] 
\\&+ \frac{1}{\sqrt{n}\hsn}t^*(1-\rho)\left(v_0-v_1+\frac{\rho-\rho^*}{\rho}\|\mu_0-\mu_1\|^2\right)\bm 1(\rho\geq\rho^*)
\\&+ \frac{1}{\sqrt{n}\hsn}(n-t^*)\rho\left(v_0-v_1-\frac{\rho^*-\rho}{1-\rho}\|\mu_0-\mu_1\|^2\right)\bm 1(\rho<\rho^*)
\end{split}
\end{equation}
Since $d$ is bounded, and $\hsn$ converges in probability to some strictly positive constant, we know 
$$
\frac{1}{\sqrt{n}}\frac{s_n}{\hsn}\frac{1}{s_n}\left[ \frac{t}{t-1}\frac{n-t}{n}\sum_{i=1}^t\left(\|\epsilon_i^*\|^2-\E\|\epsilon_i^*\|^2\right)-\frac{n-t}{n-t-1}\frac{t}{n}\sum_{i=t+1}^n\left(\|\epsilon_i^*\|^2-\E\|\epsilon_i^*\|^2\right)\right] 
\xrightarrow{w}
G
$$
where $G$ is some Gaussian process.  

For $U_2$, we have
\begin{equation*}
\begin{split}
U_2&=\frac{1}{\hsn}\frac{1}{\sqrt{n}}\frac{t}{t-1}\frac{n-t}{n}\sum_{i=1}^t\left(\|z_i-\bar z_t\|^2-\|\epsilon_i^*\|^2\right)
=
\frac{1}{\hsn}\frac{1}{\sqrt{n}}\frac{t}{t-1}\frac{n-t}{n}\sum_{i=1}^t\left(\|z_i-\bar \mu_t-\bar\epsilon_t\|^2-\|z_i-\bar\mu_t\|^2\right)
\\&=\frac{1}{\hsn}\frac{1}{\sqrt{n}}\frac{t^2}{t-1}\frac{n-t}{n}(-\|\bar\epsilon_t\|^2)
\xrightarrow{P}0
\end{split}
\end{equation*}
because $\|\bar \epsilon_{t}\|=O_p(1/\sqrt{t})$. Similarly we have $U_3\xrightarrow{P}0$.

So under the alternative,
$$S_2\xrightarrow{w}\max_{\rho\in[\rho_0,\rho_1]}\left(\frac{|G+\Delta|}{\sqrt{\rho(1-\rho)}}\right)$$
where $G$ is some Gaussian process and (suppose $\sqrt{n}(v_0-v_1)\rightarrow \Delta_v$ and $\sqrt{n}\|\mu_0-\mu_1\|^2\rightarrow\Delta_{\mu}$)
$$
\Delta=
\begin{cases}
\frac{1}{s_n}\rho^*(1-\rho)\left(\Delta_v+\frac{\rho-\rho^*}{\rho}\Delta_{\mu}\right),\rho\geq\rho^*\\
\frac{1}{s_n}(1-\rho^*)\rho\left(\Delta_v-\frac{\rho^*-\rho}{1-\rho}\Delta_{\mu}\right), \rho<\rho^*
\end{cases}
$$
Thus, when $\mu_0=\mu_1$, $\sqrt{n}|v_0-v_1|\rightarrow\infty$, we have $\Delta_v\rightarrow\infty$ and $\Delta_{\mu}=0$.
\begin{equation*}\small
\begin{split}
&\lim_{n\rightarrow\infty}\P_{H_A}\left(S_2>q^{(2)}_{\alpha}\right)
\\&=\lim_{n\rightarrow\infty}\P_{H_A}\left(\frac{1}{\sqrt{\rho(1-\rho)}}|G+\frac{1}{\sqrt{n}s_n}t^*(1-\rho)(v_0-v_1)\bm 1(\rho\geq\rho^*)
+ \frac{1}{\sqrt{n}s_n}(n-t^*)\rho(v_0-v_1)\bm 1(\rho<\rho^*)|>q^{(2)}_{\alpha}\right)
\\&=1
\end{split}
\end{equation*}
Thus, we can get the desired conclusion.
\end{proof}

\subsection{Derivation for Higher Order Correction}\label{appendix_sec:S2_correction}
First we derive Equation \eqref{eq:S2_tilde}:
From Proof of Lemma \ref{lemma:S2_null_unscaled}, we know that 
\begin{equation*}
    \begin{split}
        \frac{\sqrt{n}\rho(1-\rho)}{2\hsn}T_2=U_1+U_2+U_3
    \end{split}
\end{equation*}
where
$$U_1=\frac{1}{\sqrt{n}s_n}\left[\frac{t}{t-1}\frac{n-t}{n}\sum_{i=1}^t\|\epsilon_i\|^2-\frac{n-t}{n-t-1}\frac{t}{n}\sum_{i=t+1}^n\|\epsilon_i\|^2\right]\xrightarrow{w}W^0(\rho)$$ $$U_2=-\frac{1}{\sqrt{n}\hsn}\frac{t}{t-1}\left(1-\frac{t}{n}\right)\frac{1}{t}\sum_{i=1}^t\|\epsilon_i\|^2
\approx -\frac{1}{\sqrt{n}\hsn}\frac{t}{t-1}\left(1-\frac{t}{n}\right)\E\|\epsilon\|^2$$
$$U_3=-\frac{1}{\sqrt{n}\hsn}\frac{n-t}{n-t-1}\frac{t}{n}\frac{1}{n-t}\sum_{i=t+1}^n\|\epsilon_i\|^2
\approx -\frac{1}{\sqrt{n}\hsn}\frac{n-t}{n-t-1}\frac{t}{n}\E\|\epsilon\|^2$$
Thus, $U_2$ and $U_3$ are of order $O_p(\frac{1}{\sqrt{n}})$. By replacing the true mean with the estimated sample version, we can get Equation \eqref{eq:S2_tilde}, which cancels the $O_p(\frac{1}{\sqrt{n}})$ term from $U_2$ and $U_3$.

Equation \eqref{eq:S2_higher_order_correctness} corrects for the $O_p(\frac{1}{\sqrt{n}})$ coming from $U_1$:
Write $Z\left(\frac{t}{n}\right)=\frac{U_1}{\sqrt{\rho(1-\rho)}}$, following \cite{chen2015graph}, we have
\begin{equation*}
    \begin{split}
        &\P\left(\max_{n_0\leq t\leq n_1}Z\left(\frac{t}{n}\right)>b\right)
        \\&=\frac{1}{b}\sum_{n_0\leq t\leq n_1}\int_{x=0}^{\infty}p\left(Z\left(\frac{t}{n}\right)=b+\frac{1}{b}x\right)\P\left(\max_{n_0\leq s\leq n_1}Z\left(\frac{s}{n}\right)<b\C Z\left(\frac{t}{n}\right)=b+\frac{1}{b}x\right)dx
    \end{split}
\end{equation*}
We approximate $p\left(Z\left(\frac{t}{n}\right)=b+\frac{1}{b}x\right)$ using 3$^{rd}$ order Edgeworth Expansion and approximate $\P\left(\max_{n_0\leq s\leq n_1}Z\left(\frac{s}{n}\right)<b\C Z\left(\frac{t}{n}\right)\right)$ using a random walk.

Notice that $Z\left(\frac{t}{n}\right)$ is a sum of independent, non-identical distributed random variables, so we can apply Edgeworth Expansion and get when $b\rightarrow\infty$, $x^2/(2b^2)$ is negligible to $x$ and $x/b$ is neglibible to $b$, so let $V$ be the skewness of $Z\left(\frac{t}{n}\right)$, then
\begin{equation*}
    \begin{split}
        &p\left(Z\left(\frac{t}{n}\right)=b+\frac{1}{b}x\right)
        \approx \phi\left(b+\frac{1}{b}x\right)+\frac{1}{\sqrt{n}}\frac{1}{6}V\left[\left(b+\frac{x}{b}\right)^3-3\left(b+\frac{x}{b}\right)\right]\phi\left(b+\frac{x}{b}\right)
        \\= &\phi(b)e^{-x^2/(2b^2)-x}\left[1+\frac{1}{6\sqrt{n}}V\left(b+\frac{x}{b}\right)\left(\left(b+\frac{x}{b}\right)^2-3\right)\right]
        \\\approx & \phi(b)e^{-x}\left[1+\frac{1}{6\sqrt{n}}Vb\left(b^2-3\right)\right]
    \end{split}
\end{equation*}

To approximate $\P\left(\max_{n_0\leq s\leq n_1}Z\left(\frac{s}{n}\right)<b\C Z\left(\frac{t}{n}\right)\right)$, notice that 
\begin{equation*}
    \begin{split}
        b\left(Z\left(\frac{s}{n}\right)-Z\left(\frac{t}{n}\right)\right)\C Z\left(\frac{t}{n}\right)=b+\frac{x}{b}\sim N\left(-f_{t/n,-}'(0)|\frac{s-t}{n}|b^2,2f_{t/n,-}'(0)|\frac{s-t}{n}|b^2\right)
    \end{split}
\end{equation*}
where 
\begin{equation*}
    \begin{split}
        f_{x,-}'(0)=\frac{\partial}{\partial \delta}corr(Z\left(0\right), Z\left(\delta\right))\C_{\delta=0}
        =\frac{1}{2x(1-x)}
    \end{split}
\end{equation*}

So let $W_m^{(t)}$ be a random walk with $W_1^{(t)}\sim N(\mu^{(t)},(\sigma^2)^{(t)})$ where $\mu^{(t)}=\frac{1}{n}f_{t/n,-}(0)b^2$ and $(\sigma^2)^{(t)}=2\mu^{(t)}$. We have
$$
\P\left(\max_{n_0\leq s\leq n_1}Z\left(\frac{s}{n}\right)<b\C Z\left(\frac{t}{n}\right)\right)
\approx \P\left(\max_{n_0\leq s<t}-W^{(t)}_{t-s}<-x\right)
\approx \P\left(\min_{m\geq 1}W^{(t)}_m>x\right)
$$
Combining the above, we have
\begin{equation*}
    \begin{split}
        &\P\left(\max_{n_0\leq t\leq n_1}Z\left(\frac{t}{n}\right)>b\right)
        \approx \frac{\phi(b)}{b}\left[1+\frac{1}{6\sqrt{n}}Vb\left(b^2-3\right)\right]\sum_{n_0\leq t\leq n_1}\int_{x=0}^{\infty}e^{-x}\P\left(\min_{m\geq 1}W^{(t)}_m>x\right)dx
        \\&=\frac{\phi(b)}{b}\left[1+\frac{1}{6\sqrt{n}}Vb\left(b^2-3\right)\right]\sum_{n_0\leq t\leq n_1}(-f_{t/n,-}(0))b^2\nu\left(b\sqrt{-\frac{2}{n}f_{t/n,-}(0)}\right)
        \\&=b\phi(b)\int_{\rho_)}^{\rho_1}\left[1+\frac{1}{6\sqrt{n}}Vb\left(b^2-3\right)\right](-f_{x,-}(0))\nu\left(b\sqrt{-2f_{x,-}(0)}\right)dx
    \end{split}
\end{equation*}
And calculation shows (recall that $v=\E\|\epsilon\|^2$)
$$V=\frac{n-2t}{\sqrt{t(n-t)}}\times\text{Skewness of }\|\epsilon\|^2
=\frac{n-2t}{\sqrt{t(n-t)}}\times\frac{\E\left[\|\epsilon\|^6-3\|\epsilon\|^4v+3\|\epsilon\|^2v^2-v^3\right]}{\left[\E\left(\|\epsilon\|^4-2\|\epsilon\|^2v+v^2\right)\right]^{1.5}}$$
From symmetry of $Z\left(\frac{t}{n}\right)$, we know $\P\left(\max_{n_0\leq t\leq n_1}|Z\left(\frac{t}{n}\right)|>b\right)=2\P\left(\max_{n_0\leq t\leq n_1}Z\left(\frac{t}{n}\right)>b\right)$. By replacing the corresponding true moments by the sample version in $K$, we get Equation \eqref{eq:S2_higher_order_correctness}.

\subsection{Proof of Proposition \ref{prop:S1_S2_ind}}
First notice that 
$$\lim_{n\rightarrow\infty}nT_1(t)=\lim_{n\rightarrow\infty}n\|\bar\phi(y)_t-\bar\phi(y)_{n-t}\|^2+C_t$$ 
where $C_t$ is some constant depending on $t$. And
\begin{equation*}
    \begin{split}
        &\lim_{n\rightarrow\infty}\sqrt{n}T_2(t)
        \\&=\lim_{n\rightarrow\infty}\sqrt{n}\left[\frac{1}{t}\sum_{i=1}^t\left(\|\phi(y_i)-\bar\phi(y_i)_t\|^2-\E\|\phi(y_i)-\bar\phi(y_i)_t\|^2\right)\right]
        \\&-\sqrt{n}\left[\frac{1}{n-t}\sum_{i=t+1}^n\left(\|\phi(y_i)-\bar\phi(y_i)_t\|^2-\E\|\phi(y_i)-\bar\phi(y_i)_t\|^2\right)\right]+\widetilde C_t
    \end{split}
\end{equation*}
where $\widetilde C_t$ is a constant depending on $t$. 
Thus $\sqrt{n}\left(\bar\phi_l(y)_t-\E\bar\phi_l(y)_t\right)$, $\sqrt{n}\left(\bar\phi_l(y)_{n-t}-\E\bar\phi_l(y)_{n-t}\right)$, $\frac{1}{\sqrt{n}}\sum_{i=1}^t\left[\|\phi(y_i)-\bar\phi(y_i)_t\|^2-\E\|\phi(y_i)-\bar\phi(y_i)_t\|^2\right]$ and $\frac{1}{\sqrt{n}}\sum_{i=t+1}^n\left[\|\phi(y_i)-\bar\phi(y_i)_t\|^2-\E\|\phi(y_i)-\bar\phi(y_i)_t\|^2\right]$ are all asymptotically Gaussian with mean 0, we only need to check that their covariance converges to 0. Since our data are i.i.d, we only need to check that the pair: $\sqrt{n}\left(\bar\phi_l(y)_t-\E\bar\phi_l(y)_t\right)$ and $\frac{1}{\sqrt{n}}\sum_{i=1}^t\|\phi(y_i)-\bar\phi(y_i)_t\|^2-\E\|\phi(y_i)-\bar\phi(y_i)_t\|^2$,  $\sqrt{n}\left(\bar\phi_l(y)_{n-t}-\E\bar\phi_l(y)_{n-t}\right)$ and $\frac{1}{\sqrt{n}}\sum_{i=t+1}^n\|\phi(y_i)-\bar\phi(y_i)_t\|^2-\E\|\phi(y_i)-\bar\phi(y_i)_t\|^2$ are asymptotically uncorrelated for any $l$.
Since under the null, $\E\phi_l(y_i)=0$,
\begin{equation}
    \begin{split}
        &Cov\left(\sqrt{n}\left(\bar\phi_l(y)_t-\E\bar\phi_l(y)_t\right),\frac{1}{\sqrt{n}}\sum_{i=1}^t\|\phi(y_i)-\bar\phi(y_i)_t\|^2-\E\|\phi(y_i)-\bar\phi(y_i)_t\|^2\right)
        \\&=\sum_{i=1}^t\E\|\phi(y_i)-\bar\phi(y)_t\|^2\left(\bar\phi_l(y)_t-\E\bar\phi_l(y)_t\right)
        =\left(1-\frac{1}{t}\right)\E\|\phi(y_i)\|^2\phi_l(y_i)
    \end{split}
\end{equation}
Thus, we get the desired conclusion.

\subsection{Theoretical Guarantees for $S_3$}\label{sec_appendix:S3_guarantees}
\begin{corollary}[Asymptotic null distribution for $S_3$]\label{thm:S3_null}
	Under $H_0$, if distance $d$ satisfies 
	
	(1) $d$ is a semi-metric of negative type, 
	
	(2) $\E d(y_i,y_j)\leq M^2,\,\forall i,j$, 
	
	(3) $\E_y\left(\E_{y'}d(y,y')-\frac{1}{2}\E_{y,y'}d(y,y')\right)^{2+\delta}<+\infty$ for some $\delta>0$,
	
	(4) $\E_y|\E_{y'}d(y,y')-\E_{y,y'}d(y,y')|^{2+\delta'}<+\infty$ for some $\delta'>0$,
	\\then as $n\rightarrow\infty$,
	\begin{equation}\label{eq:S3_asymp_null}
	S_3\xrightarrow{w} 
	\max_{\rho_0\leq \rho\leq \rho_1}\frac{1}{{\rho(1-\rho)}}\left(W^0(\rho)\right)^2
	\end{equation}
	where $W^0(\cdot)$ is a Brownian bridge.
\end{corollary}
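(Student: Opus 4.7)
The plan is to decompose $S_3$ into contributions from $T_1^2$ and $T_2^2$, show that under $H_0$ the $T_1^2$-contribution is uniformly negligible, and then read off the limit from Theorem \ref{thm:S2_null} via the continuous mapping theorem. Concretely, I would write $S_3=\max_{n_0\le t\le n_1}[A_n(t)+B_n(t)]$ with
$$A_n(t):=\frac{1}{\hsn^{2}}\,\frac{t(n-t)}{n}\,T_1^2(t),\qquad B_n(t):=\frac{1}{4\hsn^{2}}\,\frac{t(n-t)}{n}\,T_2^2(t),$$
so that the triangle inequality gives $|S_3-\max_t B_n(t)|\le\sup_{n_0\le t\le n_1}A_n(t)$.

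The first main step is to argue $\sup_{n_0\le t\le n_1}A_n(t)=o_p(1)$. The proof of Lemma \ref{lemma:S1_null_unscaled} proceeds through the Hilbert-space functional CLT of \citet{tewes2017change}, which actually gives process-level tightness of $\rho\mapsto n\rho^2(1-\rho)^2\,T_1(\lceil n\rho\rceil)$ on $[\rho_0,\rho_1]$ and not merely pointwise weak convergence. This yields $\sup_{n_0\le t\le n_1}\bigl|\tfrac{t(n-t)}{n}T_1(t)\bigr|=\O_p(1)$, and then $\sup_t|T_1(t)|=\O_p(n^{-1})$ since $t(n-t)/n\ge\rho_0(1-\rho_1)n$ on the admissible range. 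Multiplying the two bounds gives $\sup_t\tfrac{t(n-t)}{n}T_1^2(t)=\O_p(n^{-1})$. Combined with $\hsn^2\xrightarrow{p}s^2>0$ from Lemma \ref{prop:proof_of_S2_null} (whose moment hypothesis follows from assumption (3) of the corollary, since $L^{2+\delta}\subset L^2$), this delivers $\sup_t A_n(t)=o_p(1)$.

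The second step is to identify $\max_t B_n(t)$. By the definition \eqref{dfn:S2} and non-negativity of $T_2(t)$, $\max_t B_n(t)=S_2^2$. Theorem \ref{thm:S2_null} gives $S_2\xrightarrow{w}\max_{\rho_0\le\rho\le\rho_1}|W^0(\rho)|/\sqrt{\rho(1-\rho)}$, so applying the continuous mapping theorem to $x\mapsto x^2$ yields $S_2^2\xrightarrow{w}\max_{\rho_0\le\rho\le\rho_1}W^0(\rho)^2/(\rho(1-\rho))$. Combining this with the $o_p(1)$ bound on $A_n$ and Slutsky's theorem gives \eqref{eq:S3_asymp_null}.

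The hard part will be extracting the uniform-in-$t$ control used in the first step: marginal weak convergence of $T_1$ at each fixed $\rho$ does not, by itself, rule out blow-up of $\sup_t A_n(t)$. Fortunately the functional CLT already invoked in the proof of Lemma \ref{lemma:S1_null_unscaled} supplies exactly the tightness needed, so no machinery beyond what the paper has already introduced is required.
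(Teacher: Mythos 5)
Your proposal is correct and follows essentially the same route as the paper, which simply states the corollary as a direct consequence of Theorem \ref{thm:S1_null}: the $T_1^2$ contribution is uniformly negligible because $\tfrac{t(n-t)}{n}T_1(t)=\O_p(1)$ forces $\sup_t|T_1(t)|=\O_p(n^{-1})$, while the $T_2^2$ part is exactly $S_2^2$, whose limit follows from part (b) and the continuous mapping theorem. Your write-up just makes explicit the uniformity and the $\hsn^2\xrightarrow{p}s^2$ step that the paper leaves implicit.
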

\begin{proof}
	Notice that Assumption (2) in Theorem \ref{thm:S2_null} is equivalent to
	$$
	2\sum_l\lambda_l\leq M^2
	$$
	Assumption (3) in Theorem \ref{thm:S1_null} is equivalent to
	$$
	\sum_l\lambda_l^2\leq+\infty
	$$
	Thus, assumption (2) in Theorem \ref{thm:S2_null} implies assumption (3) in Theorem \ref{thm:S1_null}. Then, Thoerem \ref{thm:S3_null} is a direct consequence of Theorem \ref{thm:S1_null} and Theorem \ref{thm:S2_null}.
\end{proof}

\begin{corollary}[Localization Consistency for $S_3$]\label{thm:localization_S3}
	Under $H_A$, suppose $d(y_i,y_j)\leq M^2<\infty,\,\forall i,j$, and $d$ is a semi-metric of negative type, then
	$$
	\left|\frac{\widehat \tau-\tau^*}{n}\right|=o_p(1)
	$$
	where $\widehat \tau$ is the estimated change point using statistics $S_3$.
\end{corollary}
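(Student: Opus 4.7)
The plan is to use a standard $M$-estimator consistency argument: show that $S_3(t)/n$ converges, uniformly over $t\in[n_0,n_1]$, to a deterministic population criterion $g(t/n)$, verify that $g$ is uniquely maximized at $\rho^*$ under $H_A$, and then invoke an argmax-continuity step. The basic inequality $S_3(\widehat\tau)\ge S_3(\tau^*)$ together with uniform convergence and a well-separated maximum of $g$ will force $\widehat\tau/n\xrightarrow{p}\rho^*$, which is exactly the claim $|\widehat\tau-\tau^*|/n=o_p(1)$.

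First I would establish the uniform limits of $T_1(t)$ and $T_2(t)$ on $[n_0,n_1]$. The CUSUM representations \eqref{eq:T1_CUSUM} and \eqref{eq:T2_CUSUM} express $T_1(t)$ and $T_2(t)$ in terms of the Hilbert-space sample means $\bar\phi(y)_{t\pm}$ and the empirical dispersions $V(\phi(y),\|\cdot\|_{\mathcal{H}})_{t\pm}$. Under $H_A$ these converge pointwise to piecewise-rational functions of $\rho=t/n$: for $\rho\le\rho^*$ one gets $T_1^*(\rho)=\bigl(\tfrac{1-\rho^*}{1-\rho}\bigr)^2\|\mu_0-\mu_1\|_{\mathcal{H}}^2$ and a symmetric formula for $\rho>\rho^*$, and analogously $T_2^*(\rho)$ combines a $(v_0-v_1)$ term with a cross-term $\tfrac{(\rho^*-\rho)(1-\rho^*)}{(1-\rho)^2}\|\mu_0-\mu_1\|_{\mathcal{H}}^2$. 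Uniformity $\sup_{n_0\le t\le n_1}|T_j(t)-T_j^*(t/n)|=o_p(1)$ for $j=1,2$ follows by combining the Hilbert-space concentration bound in Lemma \ref{appendix_lemma:bound_noise_kernel_bd} (for the mean terms in $T_1$) with the fourth-moment control that underpins Lemma \ref{prop:proof_of_S2_null} (for the quadratic terms in $T_2$), and taking a union bound over $t$ at negligible logarithmic cost. Using $\hsn^2\xrightarrow{p}s^2$, this yields
$$\sup_{n_0\le t\le n_1}\Bigl|\tfrac{1}{n}S_3(t)-g(t/n)\Bigr|\xrightarrow{p}0,\qquad g(\rho):=\frac{\rho(1-\rho)}{s^2}\Bigl[T_1^*(\rho)^2+\tfrac{1}{4}T_2^*(\rho)^2\Bigr].$$

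Second I would check that $g$ is uniquely maximized at $\rho^*$ whenever $F_0\neq F_1$ is reflected by $\phi$ (i.e.\ $\mu_0\neq\mu_1$ or $v_0\neq v_1$). A direct computation shows that $\rho(1-\rho)T_1^{*2}(\rho)$ is strictly increasing on $[\rho_0,\rho^*]$ and strictly decreasing on $[\rho^*,\rho_1]$ whenever $\mu_0\neq\mu_1$, and $\rho(1-\rho)T_2^{*2}(\rho)$ has the same unimodal shape around $\rho^*$ whenever $v_0\neq v_1$. Since both summands are non-negative, $g$ peaks uniquely at $\rho^*$. The argmax step is then routine: for any $\epsilon>0$, set $\eta:=g(\rho^*)-\sup_{|\rho-\rho^*|\ge\epsilon,\,\rho\in[\rho_0,\rho_1]}g(\rho)>0$; combining $S_3(\widehat\tau)/n\ge S_3(\tau^*)/n\to g(\rho^*)$ with the uniform convergence gives $g(\widehat\tau/n)>g(\rho^*)-\eta$ with probability tending to one, which forces $|\widehat\tau/n-\rho^*|<\epsilon$.

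The step I expect to be the hardest is verifying uniqueness of the maximum of $g$ in the mixed regime where only the mean changes, $v_0=v_1$ but $\mu_0\neq\mu_1$. In that case $T_2^*(\rho^*)=0$, and the cross-term $\tfrac{(\rho^*-\rho)(1-\rho^*)}{(1-\rho)^2}\|\mu_0-\mu_1\|_{\mathcal{H}}^2$ drives $T_2^{*2}$ to be strictly positive away from $\rho^*$, so naively one might worry that $\rho(1-\rho)T_2^{*2}(\rho)$ pushes the overall peak of $g$ off of $\rho^*$. A careful piecewise calculation (comparing the one-sided derivatives of $g$ at $\rho=\rho^*$ and checking monotonicity on each side) shows that the $T_1^{*2}$ contribution dominates near $\rho^*$ and that the cross-term never places the global maximum elsewhere; the analogous situation in the pure-scale-change regime is symmetric and simpler. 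This case analysis, rather than any probabilistic subtlety, is the main technical burden.
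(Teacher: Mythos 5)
Your proposal is correct in outline and takes the same high-level route as the paper --- convergence of the (suitably scaled) criterion to a deterministic function of $\rho$, followed by an argmax argument --- but it is substantially more complete than what the paper actually does. The paper's proof of this corollary is one line: it quotes from the proof of Theorem \ref{thm:S1_alternative} that $T_1(t)$ converges to $\left(\xi(\rho)/(\rho(1-\rho))\right)^2\|\mu_0-\mu_1\|_{\mathcal{H}}^2$, observes this is maximized at $\rho^*$, and invokes the argmax theorem; it never addresses the fact that $\widehat\tau$ maximizes the full criterion $\frac{t(n-t)}{n}\left[4T_1^2(t)+T_2^2(t)\right]$, i.e.\ it ignores both the $T_2^2$ component and the $\rho(1-\rho)$ weight, and it says nothing about the pure scale-change case $\mu_0=\mu_1$, $v_0\neq v_1$, where its displayed limit is identically zero. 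Your version tracks exactly these omissions: you form the combined population criterion $g(\rho)=\rho(1-\rho)\left[T_1^{*2}(\rho)+\tfrac14 T_2^{*2}(\rho)\right]/s^2$, insist on uniform convergence, and isolate the real issue, namely uniqueness of the maximizer of $g$ at $\rho^*$. Two remarks on your hard-case discussion. First, your intermediate claim that $\rho(1-\rho)T_2^{*2}(\rho)$ is itself unimodal with peak at $\rho^*$ ``whenever $v_0\neq v_1$'' is not literally true in the mixed regime: for $\rho\le\rho^*$ the limit involves $\frac{1-\rho^*}{1-\rho}(v_0-v_1)-\frac{(\rho^*-\rho)(1-\rho^*)}{(1-\rho)^2}\|\mu_0-\mu_1\|_{\mathcal{H}}^2$, and cancellation can move its zero away from $\rho^*$. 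Second, the conclusion you need nevertheless holds: writing $s=\rho^*-\rho$, $b=1-\rho^*$, $a=\|\mu_0-\mu_1\|_{\mathcal{H}}^2$, $\delta=v_0-v_1$, a short computation gives, for $\rho\le\rho^*$,
\begin{equation*}
(a^2+\delta^2)(s+b)^3-b\left[b^2a^2+\left(\delta(s+b)-sa\right)^2\right]
= s\left[a^2(s+b)^2+2a^2b^2+\delta^2(s+b)^2-2\delta ab(s+b)\right]\ge 0,
\end{equation*}
with strict inequality for $s>0$ unless $a=\delta=0$, which together with $\rho^*-s<\rho^*$ yields $g(\rho)<g(\rho^*)$ for $\rho<\rho^*$ (the side $\rho>\rho^*$ is symmetric); so your asserted case analysis does go through. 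Finally, note that both your argument and the paper's tacitly require the change to be visible through $\phi$ (i.e.\ $\mu_0\neq\mu_1$ or $v_0\neq v_1$); under the bare $H_A$ with $\mu_0=\mu_1$ and $v_0=v_1$ neither proof (nor, strictly, the corollary) applies, a caveat you correctly flag.
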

\begin{proof}
From the proof of Theorem \ref{thm:S1_power} (Section \ref{appendix_sec:proof_S1_power}), we know that 
$$
T_1\xrightarrow{w}\frac{\|\bm W^0(\rho)+\xi(\rho)\Delta\|^2-\delta(\rho)}{n\rho^2(1-\rho)^2}
=\left(\frac{\xi\left(\rho\right)}{\rho(1-\rho)}\right)^2\left\|\mu_0-\mu_1\right\|^2
$$
It is obvious that the maximum of $\left(\frac{\xi\left(\rho\right)}{\rho(1-\rho)}\right)^2$ is obtained at $\rho=\rho^*$. From the Argmax Theorem, we know that $ \frac{\hat \tau}{n}-\frac{\tau^*}{n}=o_p(1)$.
\end{proof}

\begin{corollary}[Power for $S_3$]\label{thm:S3_power}
	If (1) $d(y_i,y_j)\leq M^2<\infty,\,\forall i,j$, (2) $d$ is a semi-metric of negative type,  then
	\begin{equation*}
	\P_{H_A}\left(S_3>q^{(2)}_{\alpha}\right)\rightarrow 1,\quad n\rightarrow\infty
	\end{equation*}
	if either $\sqrt{n}\|\mu_0-\mu_1\|^2\rightarrow \infty$ or $\sqrt{n}|v_0-v_1|\rightarrow \infty$.
\end{corollary}
\begin{proof}
	Corollary \ref{thm:localization_S3} is a direct consequence of Theorem  \ref{thm:S1_alternative} and Theorem \ref{thm:S2_alternative}.
\end{proof}

\subsection{Proof of Theorem \ref{thm:multiple_CP}}
\begin{proof}
For $S_1$, using exactly the same techniques as in Theorem \ref{thm:S1_null}, it is easy to show that 
$$
 T_1^{l,u}(t)\rightarrow
 \left\|\frac{1}{t-l}\sum_{i=l}^t \E\phi(y_i)
    -\frac{1}{u-t}\sum_{i=t+1}^u \E\phi(y_i)\right\|^2\quad\text{uniformly}
$$
From Lemma 3.2 of \cite{rice2019consistency}, we know that $\arg\max_{l\leq k\leq u}\|\frac{1}{t-l}\sum_{i=l}^t \E\phi(y_i)
    -\frac{1}{u-t}\sum_{i=t+1}^u \E\phi(y_i)\|^2\in\mathcal{D}$. Then from the uniform convergence of $T_1^{l,u}(t)$ and the argmax Theorem, we know that the desired conclusion holds.
    
For $S_2$, notice that
$$
 T_2^{l,u}(t)\rightarrow
 \left|\frac{1}{t-l}\sum_{i=l}^t \|\phi(y_i)-\E\phi(y_i)\|^2
    -\frac{1}{u-t}\sum_{i=t+1}^u \|\phi(y_i)-\E\phi(y_i)\|^2\right|\quad\text{uniformly}
$$
Then, similar as the case for $S_1$, the conclusion for $S_2$ follows directly.
\end{proof}

\bibliographystyle{apa}
\bibliography{changepoint} 
\end{document}